\title{A Reflection Principle for Potential Infinite Models of Type Theory} 
\titlerunning{Potential Infinite Models of Type Theory} 
\author{Matthias Eberl}{LMU Munich, Theresienstr. 39, 80333 Munich, Germany}{matthias.eberl@mail.de}{https://orcid.org/0000-0002-2410-3747}{}
\authorrunning{Matthias Eberl} 
\keywords{Indefinite extensibility, Potential infinite, Reflection principle} 
\def\EEE{\mathcal{E}}
\def\HHH{\mathcal{H}}
\def\III{\mathcal{I}}
\def\JJJ{\mathcal{J}}
\def\LLL{\mathcal{L}}
\def\MMM{\mathcal{M}}
\def\NNN{\mathcal{N}}
\def\epsilon{\varepsilon}
\def\phi{\varphi}
\def\rho{\varrho}
\def\theta{\vartheta}
\def\nat{\mathbb{N}}
\def\bool{\mathbb{B}}
\def\iffdef{:\!\iff}
\def\imp{\Rightarrow}
\def\prop{bool}
\def\pot#1{\mathcal{P}(#1)}
\def\nil{()}
\def\pmap{\stackrel{p}{\mapsto}}
\def\Pmap{\stackrel{p}{\longmapsto}}
\def\Qmap{\stackrel{q}{\longmapsto}}
\def\comp{\asymp}
\def\emb#1#2{emb_{#1}^{#2}}
\def\proj#1#2{proj_{#2}^{#1}}
\def\elem{\EEE\LLL}
\def\fil{\mathfrak{D}}
\def\invlim{\varprojlim}
\def\dirlim{\varinjlim}
\def\limf{\underline{\textrm{lim}}}
\def\y#1{\mathsf{#1}}
\def\up{\,\uparrow\!}
\def\type{{Typ}}
\def\y#1{\mathsf{#1}}
\def\yx#1{\mathsf{x_{#1}}}
\def\yxt#1#2{\mathsf{x}_\mathsf{#1}^{#2}}
\def\vecb#1{\boldsymbol{#1}}
\def\val#1{\lbrack\!\lbrack#1\rbrack\!\rbrack}
\def\tdt#1#2#3{#2 \vdash #1 : #3}
\def\tdts#1#2#3{#2 \mid #1 : #3}
\begin{document}

\maketitle

\begin{abstract}
Denotational models of type theory, such as set-theoretic, domain-theoretic, or category-theoretic models use (actual) infinite sets of objects in one way or another. The potential infinite, seen as an extensible finite, requires a dynamic understanding of the infinite sets of objects. It follows that the type $nat$ cannot be interpreted as a set of all natural numbers, $\val{nat} = \nat$, but as an increasing family of finite sets $\nat_i = \{0, \dots, i-1\}$. Any reference to $\val{nat}$, either by the formal syntax or by meta-level concepts, must be a reference to a (sufficiently large) set $\nat_i$.

We present the basic concepts for interpreting a fragment of the simply typed $\lambda$-calculus within such a dynamic model. A type $\rho$ is thereby interpreted as a process, which is formally a factor system together with a limit of it. A factor system is very similar to a direct or an inverse system, and its limit is also defined by a universal property. It is crucial to recognize that a limit is not necessarily an unreachable end beyond the process. Rather, it can be regarded as an intermediate state within the factor system, which can still be extended.

The logical type $\prop$ plays an important role, which we interpret classically as the set $\{true, false\}$. We provide an interpretation of simply typed $\lambda$-terms in these factor systems and limits. The main result is a reflection principle, which states that an element in the limit has a ``full representative'' at a sufficiently large stage within the factor system. For propositions, that is, terms of type $\prop$, this implies that statements about the limit are true if and only if they are true at that sufficiently large stage.
\end{abstract}

\section{Introduction}

In set theory, infinite sets are given by the dictum of the axiom of infinity. There is no idea of ``construction'' or ``approximation'' involved --- it is a \emph{static} concept where only existence is required, without any way to get to these sets. In contrast, consider the constructions of infinite sets as limits of direct and inverse systems. These sets are approximated and so can be understood from the perspective of a potential infinite. Moreover, they possess the structure of their approximating parts. Finally, and most importantly from a finitistic perspective, there is no necessity to ``jump over'' to an absolute, actual infinite limit set if all states of the system are finite. Instead, if one takes care of all the stages of one's investigation, then a sufficiently large state within the system is sufficient and is a full substitute for an infinite limit set.

From a consequent potentialist's point of view, it is actually a misuse of language to call a set infinite. Since the potentialist's view is a form of finitism, and since sets are given by their extension, every set is finite in this regard. Therefore, what can be considered infinite is the type, say $nat$, whereas the set $\nat$ would be more accurately described as \emph{indefinitely extensible}. Thus, the usual terminology, saying that a set is infinite, means that its type is infinite and that the extension of the type, i.e., the set of elements of that type, is given by some \emph{indefinitely large} state (or \emph{sufficiently large} state) in an indefinitely extensible system of finite sets.

A system, introduced in Section \ref{factsec}, formalizes a changing totality of objects, an ``open'' and potentially unending process in which both the collection and its elements expand simultaneously. This process allows a (relative) completion or compactification by constructing a limit, which temporarily ends or ``closes'' the process.

The concept of partiality becomes relevant in extensible systems. It is a consequence of the fact that objects can be generated, or if you prefer a less constructive language, detected, so that they do not exist from the beginning. Partiality first appears in direct systems, where objects do not exist at the initial stages. In this situation, partiality is not difficult to deal with, but it becomes demanding in the function space construction, especially in the presence of higher types.

A few words about the notation. The term ``iff'' is an abbreviation for ``if and only if''. $\nat$ refers to the set of natural numbers $\{0,1,2,\dots\}$, $\nat^+ := \nat \setminus \{0\}$ and $\nat_i := \{0, \dots, i-1\}$. We write $[A \to B]$ for the function space of all functions with domain $A$ and codomain $B$, and $\pot{\MMM}$ for the power set of $\MMM$.

\subsection{Extensibility, Coinduction and Domain Theory}
\label{extsec}

A fundamental concept in the context of the potential infinite is \emph{extensibility}. The main modes of extensibility are the creation of new objects and the creation of new knowledge about existing objects. However, we understand the latter as a differentiation of an object or an identification of several objects. In the case of differentiation, there may be multiple versions of an object at a later stage. When considering limit constructions, this is different from the understanding of the accumulation of information. To illustrate this point, let us consider inductive and coinductive definitions. These are related to adding and differentiating objects respectively, but they are not the only way.

\begin{enumerate}
\item The dynamic reading of an inductive definition leads to an infinite process of creating objects, related to direct systems.
\item The dynamic reading of a coinductive definition leads to an infinite process of differentiating objects, related to inverse systems.
\end{enumerate}

To give an example, let $Seq$ denote finite 0-1-sequences and $Seq^\infty$ stand for infinite 0-1-sequences. Then the algebra with constructors $nil : \{\ast\} \to Seq$ and $append : Seq \times \nat_2 \to Seq$ inductively defines the structure of $Seq$, whereby $\nat_2 = \{0,1\}$. The coinductive definition of $Seq^\infty$ has one destructor-pair $(head,tail) : Seq^\infty \to \nat_2 \times Seq^\infty$ and defines a coalgebra. To give these structural definitions a dynamic reading requires an index set, which will be $\nat^+$ and $\nat$. 

In case of inductive definitions one starts with the element $\ast$. The states of $Seq$ are thus $Seq_0 = \{\ast\}$, $Seq_1 = \{\nil\}$, $Seq_2 = \{\nil, 0, 1\}$, $Seq_3 = \{\nil, 0, 1, 00, 01, 10, 11\}$ and so on. The inductive definition of $Seq$ gives rise to a direct system $(Seq_i)_{i \in \nat^+}$ with subset inclusion as embedding. A direct system is more general than this construction by an inductive definition, as it allows for the possibility of non-injective embeddings. This corresponds to the addition and identification of objects in a single process and is relevant for a construction of quotients.

For a dynamic understanding of coinductive definitions, start with a ``generic'' element, say $s$, so $Seq^\infty_0 = \{s\}$. The destructor-pair $(head,tail)$ gives $Seq^\infty_1 = \{0s, 1s\}$, $Seq^\infty_2 = \{00s, 01s, 10s, 11s\}$, and so on. The coinductive definition of $Seq^\infty$ allows one to increase the knowledge about the sequence, which we understand as a process of differentiation. The projections $\proj{i'}{i} : Seq^\infty_{i'} \to Seq^\infty_{i}$ then remove this information gained by differentiation, e.g., $\proj{2}{1}(01s) = 0s$. We read $01s \in Seq^\infty_2$ and $0s \in Seq^\infty_1$ as states of the same (infinite) object, differing only in the amount of information that we have about that object. The inverse system $(Seq^\infty,proj)$ corresponds to subsets $A \subseteq \nat$ if we take $A = \{i \in \nat \mid s_i = 1\}$ for $(s_i)_{i \in \nat} \in Seq^\infty$. 

This idea of differentiating objects is also related to Brouwer's concept of (lawlike and lawless) choice sequences within a spread --- the standard text on this subject is \cite{troelstra1979}. However, our approach does not involve the constructive reasoning that is inherent in the definition of a spread.

Differentiating infinite objects differs from domain theory \cite{abramsky1994domain}, where the ideal, infinite elements are ideals (being infinite sets) of their approximating parts, which are the compact elements. In domain theory, each approximation is seen as a different object, not as states of one object, and the ideal completion uses the idea that sets are \emph{actual} infinite. In other words, the idea of differentiating infinite objects, as formulated in coinductive definitions and inverse systems, is not present in domain theory. Domain theory uses inductive definitions and least fixed point constructions.

Another way to get these ideal elements in domains is to think of an infinite domain as a bilimit of finite domains. A bilimit is a limit construction where direct and inverse limits coincide. One could interpret this coincidence as a reduction of inverse systems to direct systems. In the approach here, the limit construction is, roughly speaking, between the direct and the inverse limit, since the predecessor relation $\pmap$ (which will be introduced soon) is, again roughly speaking, between the embedding $emb$ and the projection $proj$. In Section \ref{factsec} we will introduce such an embedding-projection pair, associated with the predecessor relation $\pmap$. Direct and inverse systems can then be seen as extreme situations of adding and differentiating objects, while the system based on the predecessor relation $\pmap$ is in general a combination of both.

\subsection{Formalizing the Potential Infinite}
\label{forminf}

A potential infinite set $\MMM$ is a dynamic, finite set. To formalize this consequently, any reference to $\MMM$ can only be made by reference to some finite state $\MMM_i$. A completed totality of all elements or all states of this set does not exist. For example, there exist states $\nat_i$ of the set of natural numbers, but there is neither a complete set $\nat$ nor the complete family of all sets $\nat_i$. The latter has to do with the fact that, for a consequent reading, this finitistic view applies to meta-level concepts as well, in particular to the index set from which the indices $i$ are taken; see Section \ref{metasec} for more on this.

In a first step we introduce indices $i$, given by a directed index set $\III$, i.e., a set $\III$ of stages together with a binary, reflexive and transitive relation $\leq$ so that any finite set of indices has an upper bound. A potential infinite set is thus a family $\MMM_\III$ of states $\MMM_i$, $i \in \III$, where all sets are finite.

We want to express that two elements $a_i \in \MMM_i$ and $a_{i'} \in \MMM_{i'}$ at different stages $i$ and $i'$ are equal. This equality is not given by an equivalence relation, but by a family of reflexive relations $\pmap_{i',i} \, \subseteq \MMM_{i'} \times \MMM_{i}$ for $i' \geq i$. We write $a_{i'} \pmap a_i$ for $(a_{i'}, a_{i}) \in \ \pmap_{i',i}$, saying that $a_i$ is a predecessor of $a_{i'}$, and we use $\pmap$ as an abbreviation for $\pmap_{i',i}$. Reflexivity means that $a_i \pmap a_i$ for all $i \in \III$ and $a_i \in \MMM_i$. We do not require $\pmap$ to be transitive.\footnote{The reason is this: The index set of the function space is $\III \times \JJJ$ with pairs written as $i \to j$ (c.f.~Section \ref{funspacefac}). Call an extension of the index from $i \to j$ to $i \to j'$, $j' \geq j$, \emph{covariant} and from $i \to j$ to $i' \to j$, $i' \geq i$, \emph{contravariant}. Then a combination of two different kind of extensions may fail to be a correct extension. For an example, see \cite{eberl2023}.} A \emph{system} $(\MMM_\III,\pmap)$ consists of a family $\MMM_\III$ together with reflexive relations $\pmap$. Later we will introduce further properties in order to allow a function space construction, leading to the notion of a \emph{factor system}
\cite{eberl2023}.

Let types $\rho \in \type$ be given. In this paper they consist of some base types $\iota$, including type $\prop$ for propositions, which are interpreted classically as the Boolean values $true$ and $false$, and a type constructor $\to$. So types are $\rho ::= \iota \,\,|\,\, \rho \to \rho$. A (typing) context $\Gamma$ is a list of types $(\rho_0, \dots, \rho_{n-1})$. More explicitly we can write $(\yx{0} : \rho_0, \dots, \yx{n-1} : \rho_{n-1})$ for the context $\Gamma$, since we use a fixed list of variables $\yx{0},\yx{1},\yx{2}, \dots$ in Section \ref{sysemlam}. The empty context is $\nil$ and $\Gamma.\rho$ denotes the context $\Gamma$, extended by $\rho$. We assume that for each type $\rho$ there is an index set $(\III_\rho,\leq)$, so each type comes with its own set of stages. For instance, the index set for type $nat$ is $(\nat^+,\leq)$ with $\MMM_i = \nat_i$ for $i \in \nat^+$.\footnote{The use of $\nat^+$ instead of $\nat$ as index set is done for technical reasons: A factor system has projections between different states $\nat_i$, and there is no projection from $\nat_i$, $i > 0$ into $\nat_0 = \emptyset$.} Another example is type $\prop$ with the singleton set $(\{\prop\},=)$ of one index $\prop$ and $\MMM_{\prop} := \bool := \{true,false\}$. For a typing context $\Gamma = (\rho_0, \dots, \rho_{n-1})$ define $\III_\Gamma := \III_{\rho_0} \times \dots \times \III_{\rho_{n-1}}$ and endow  $\III_\Gamma$ with the product order. 

Since we want to consider sets, relations and functions as objects, i.e., as elements of further potential infinite sets, we have to consider elements as dynamic entities as well. A dynamic object $a$ of type $\rho$ is given by its states $a_i$, $i \in \III_\rho$. As with sets, any reference to $a$ can only be a reference to one of its states $a_i$.

A dynamic object needs not be defined on all stages $i \in \III_\rho$, so the index set $\III_a$, for which a state $a_i$ of $a$ exists, is a subset of $\III_\rho$, i.e., $\III_a \subseteq \III_\rho$. One of the basic requirements is that every object has ``sufficiently many'' indices $\III_a$. For natural numbers, or more generally for base type objects, the situation is simple: If a number $n$ occurs at stage $n+1$, i.e., $n \in \nat_{n+1}$, it will be there for all future stages $m > n+1$, i.e., $n \in \nat_m$, so the index set $\III_n = \{n+1, n+2, \dots\}$ is an up-set. By \emph{up-set} we mean a non-empty, upward closed subset, in this case a subset of $\nat^+$. For higher-order functions, however, the situation is less straightforward and is one of the main challenges of this approach.

The concept of a potential infinite has two aspects, a cardinal aspect $\fil$ and an ordinal aspect $\ll$. Let $C = (i_0, \dots, i_{n-1}) \in \III_\Gamma$ be a list of indices $i_0 \in \III_{\rho_0}, \dots, i_{n-1} \in \III_{\rho_{n-1}}$, for $\Gamma = (\rho_0, \dots, \rho_{n-1})$, called \emph{state context}. Write $C.i$ for the extension of $C$ by $i$.
\begin{enumerate}
\item The cardinal aspect is given by a set $\fil_\Gamma \subseteq \pot{\III_\Gamma}$ for each context $\Gamma$. $\HHH \in \fil_\Gamma$ says that there are \emph{indefinitely many}, or \emph{sufficiently many} contexts in $\HHH$.

\item The ordinal aspect is given by relations $\ll^{\Gamma.\rho} \, \in \fil_{\Gamma.\rho}$. Let $C \ll i$ stand for $C.i \in \ll^{\Gamma.\rho}$, meaning that a stage $i$ is \emph{indefinitely large}, or \emph{sufficiently large} relative to the state context $C$.
\end{enumerate}

The sets $\fil_\Gamma$ satisfy the following properties: Each set $\HHH \in \fil_\Gamma$ is cofinal --- recall that cofinality of $\HHH$ means $\forall C \in \III_\Gamma \, \exists C' \geq C$ with $C' \in \HHH$. This minimal requirement simply states that we will always find a context in $\HHH$ beyond any bound. $\fil_\Gamma$ is closed under supersets, so ``more than indefinitely many is indefinitely many''. Furthermore, an up-set on $\III_\Gamma$ always has indefinitely many indices. Finally, the main restriction is that $\fil_\Gamma$ is closed under intersections. This is necessary in order to guarantee that a relation between two objects can be established on indefinitely many indices. This amounts to saying: 
\begin{equation}
\label{filtereq}
\tag{Filter}
\fil_\Gamma \text{ is a proper filter on } \{\HHH \subseteq \fil_\Gamma \mid \HHH \text{ is cofinal}\} \text{ and } \fil_\Gamma \text{ contains all up-sets}.
\end{equation}

We will use the locution \emph{$\fil$-many indices}, which refers to a set in $\fil_\Gamma$, and \emph{cofinal many indices}, which refers to a cofinal index set. The interpretation in a potential infinite structure, which we introduce in Section \ref{sysemlam}, will be relative to $\fil$ (and later also relative to $\ll$, when we introduce the universal quantifier in a subsequent paper). 

A basic theme of the potential infinite is \emph{dependency}. In particular, there are no fixed sets $\fil$ and $\ll$, but these are parameters that depend on factors of the concrete mathematical investigation and the state of it. Another way of expressing this is to say that $\fil$ and $\ll$ are intensional notions whose extension depends on the context of investigation. In the same way that $\fil$ and $\ll$ depend on the investigation, some concepts, in turn, depend on $\fil$ and $\ll$; the notion of continuity depends on $\fil$ and the interpretation of the universal quantifier depends on $\ll$. It is possible to define continuity on limit sets without reference to the underlying system. The basic relation is then a family of PERs and we will explore this structure in more detail in a separate paper. Moreover, in this paper we deal only with the cardinal aspect, i.e., $\ll$ is not considered here.

\subsection{Relation to Constructive Approaches}

We are only investigating the idea of a potential infinite, not that of constructivity, decidability, complexity or knowledge about existence, which are important concepts in intuitionism \cite{troelstra1988constructivism} and in theories about computability \cite{longley2015higher}. 
The common models of intuitionistic logic, such as Kripke models, or more generally topos-theoretic models \cite{maclane2012sheaves}, use unbounded universal quantification. For instance, in a Kripke model, the validity of a universal quantified formula uses a reference to all ``future'' nodes --- there could be infinitely many of them and at each such node the carrier set could be an infinite set as well. In our approach, only finite sets are used, and in a consequent finitistic view only finitely many of them.

We use classical logic. This is because it has a simpler model than constructive models and is more widely used. Furthermore, the results presented here rely on the fact that classical logic has a finite number of truth values. This is an important difference from intuitionistic logic, which has, if truth values are used at all, infinitely many of them.

The novel aspect of this potential infinite model is the introduction of \emph{state judgements}, which are refinements of typing judgements. If $\tdt{\y{r}}{\Gamma}{\rho}$ is such a typing judgement, then a state judgement has the form $\tdts{\y{r}}{C}{i}$, where $C$ is a stage of the context $\Gamma$ and $i$ is a stage of the type $\rho$. The idea that we can only refer to infinite objects via a specific state is reflected in the fact that the primary object of interpretations is a state judgement $\tdts{\y{r}}{C}{i}$. It is interpreted in a family of factor systems, which have only ``local'' application functions given as 
\[
App_{i,j} : [\MMM_i \to \NNN_j] \times \MMM_i \to \NNN_j, \ (f_{i \to j}, a_i) \mapsto f_{i \to j}(a_i).
\]

Additionally, the typing judgement $\tdt{\y{r}}{\Gamma}{\rho}$ has an interpretation in the limit set. However, the notion of a limit of a factor system depends on the notion of ``indefinitely many stages'', which we formalize as sets $\fil_\rho$. This is in particular relevant for the function space and the possibility to define a ``global'' application function $App : [\MMM \to \NNN] \times \MMM \to \NNN, (f, a) \mapsto f(a)$ on limit sets, leading to a common \emph{extensional type structure} \cite{barendregt_1984}. The question which higher-order functions exist and whether all local applications together yield a global application depends on the properties of $\fil_\rho$. A global application function is available for all first-order functions, however, this may not be the case for higher-order functions. In contrast, in a type structure, also referred to as \emph{(typed) applicative structure}, a global application is available for all types. This holds analogously for \emph{Kripke applicative structures} \cite{mitchell1991kripke}.

\subsection{The Meta-Theory}
\label{metasec}

The meta-theory in which the concepts are developed is classical higher-order logic, as formalized in Church's simple type theory \cite{church1940formulation}. This theory will also serve as the investigated theory. At the object-level, we develop a potential infinite model in order to interpret typed $\lambda$-terms. At the meta-level these two views of infinity are relevant:
\begin{enumerate}
\item One accepts actual infinity at the meta-level. In this case, a limit can be seen as the usual actual infinite set beyond the system. This view allows a comparison of an actual infinite model, given as limit structure, with the potential infinite part, i.e., the system.

\item One uses the view that infinity is an extensible finite. Then a limit is an intermediate state of the system. This is the consistent realization of the finitistic approach.
\end{enumerate}

The reflection principle, which is the main theorem of this paper, states in the first case that all objects in the limit, including propositions, have a counterpart in the factor system. So it says something like this: Whatever exists and holds under the assumption that actual infinite sets exist, already exists and holds at a sufficiently large stage in the system. For infinite objects, these can be seen as approximations. For type $\prop$, which is interpreted in classical logic as a finite set of truth values $\{true,false\}$, the values are the same in the actual infinite limit set and at a finite stage of the system. This is because these values are not approximated.

In the second case, the reflection principle is only a means to show the correctness of the interpretation. However, a consistent realization requires that an infinite set on meta-level, like the index set $\III$, is only available at a stage $j$. A consequent realization in type theory moreover uses a type in place of the index set, together with a term $\leq$, which is then shown to be reflexive, transitive, and directed.

The use of a classical meta-theory instead of a constructive one is not essential here. We could also take an intuitionistic type theory at meta-level and develop most of the model theory in a pure constructive way. So we expect that the model construction can be formalized in common proof assistants such as Coq, Lean or Agda. We need, however, a bit of classical reasoning, at least when introducing the universal quantifier. To prove the reflection principle with universal quantifier as an extension of Theorem \ref{mainthm}, one has to do a kind of L\"owenheim-Skolem construction. This requires that a universal quantified formula is either true (at all stages), or it is false at some stage and is false at all later stages, too. It is of greater significance that, for a consequent realization, in which only potential infinities are used at meta-level, the proof assistant must implement state judgments.

\subsection{Structure of the Paper}

We already started in Section \ref{forminf} to formalize the potential infinite as a dynamic concept, which replaces an actual infinite set with a \emph{factor system}, which is, roughly, a generalization of a direct and inverse system. The concept of factor systems was first introduced in \cite{eberl2023}. In Section \ref{factlimsec} we reiterate the definition of a factor system and add further definitions that are necessary for an interpretation. These are primarily the notions of a direct and inverse factor system, which are required to interpret variables. We show the construction of the function space between two factor systems and elucidate the notions of a target and a limit of a factor system. As with direct and inverse systems, limits are targets that satisfy a universal property and have a concrete construction. In Section \ref{moretlsec} we demonstrate how to introduce an application that makes limit sets a type structure.

Section \ref{sysemlam} introduces a judgement for states, parallel to judgements for types. These are defined on a fragment of the simply typed $\lambda$-calculus, which we call \emph{core fragment}. Based on these state judgements we give a first version of an interpretation of $\lambda$-terms in the core fragment, not including any constants. In particular, the present paper does not yet include logic. We give an interpretation of types and terms with two parts, one is within the system, the other is in its limit. Based on this interpretation we show a first version of a reflection principle, which says that an element $a$ of type $\rho$ in the limit set is reflected by an element $a_i$ at some stage $i \in \III_\rho$ in the system. This element $a_i$ is an approximation of $a$, and at the same time it fully represents $a$. If logic is included, then the representation includes all propositions about these elements, so anything we can say about $a$ is true if and only if it is true for $a_i$.

\section{Factor Systems and their Limits}
\label{factlimsec}

Factor systems and their limits have been introduced in \cite{eberl2023}. In this paper, we summarize their properties. The main concept is that of a factor system, those of a prefactor system (with embeddings/projections) are afferent notions. In addition to that, we introduce the notion of a \emph{stable} system, which is a natural notion to prove stronger properties, although these are not necessary here. We prove that stability is closed under both the function space construction and the limit construction.

Relevant for the interpretation in Section \ref{sysemlam} are the specific forms of a \emph{direct} and \emph{inverse} factor system. A direct factor system is similar to a direct system, it is more specific in the sense that the embedding is part of an embedding-projection pair. On the other hand it is more general in the sense that equations hold only up to an equivalence relation. The same holds for inverse factor systems and inverse systems. Moreover, we introduce the concept of a \emph{homomorphism} between two systems. 

The subsequent Lemmata \ref{elemlem} and \ref{elemlem2} are extended versions of corresponding lemmata in \cite{eberl2023}, which use the property (\ref{filtereq}), not only cofinality. These versions are necessary to prove Corollary \ref{limitfuncor}, which states that the function space construction commutes with the limit construction. This is a prerequisite for the definition of a model. Proposition \ref{funspprop} and Corollary \ref{invdiecor} describe how direct and inverse factor systems extend to the function space and to the limit. Both are essential to prove the reflection principle for variables.

\subsection{Factor Systems}
\label{factsec}

$\III$ will always denote a non-empty directed index set with preorder $\leq$. A \emph{system} is a pair $(\MMM_\III,\pmap)$ consisting of a family $\MMM_\III := (\MMM_i)_{i \in \III}$ and reflexive (for $i = i'$) relations $\pmap$ on $\MMM_{i'} \times \MMM_i$ for $i' \geq i$. Two elements $a_i \in \MMM_{i}$ and $b_j \in \MMM_{j}$ are \emph{consistent}, written as $a_i \comp b_j$, iff there is an index $i' \geq i, j$ and some $a_{i'} \in \MMM_{i'}$ such that $a_{i'} \pmap a_i$ and $a_{i'} \pmap b_j$. As a convention, whenever we use a suffix $i \in \III$ for some element, this refers to the state, e.g.~$a_i \in \MMM_i$. An important special case is that the relations $\pmap$ are partial functions, which is equivalent to:
\begin{equation}
\tag{Fun}
\label{idmapeq}
a_i \comp b_i \iff a_i \pmap b_i \iff a_i = b_i
\end{equation}
for all $a_i, b_i \in \MMM_i$ and all $i \in \III$. A system that satisfies (\ref{idmapeq}) is called \emph{standard}. $(\MMM_\III,\pmap)$ is a \emph{prefactor system} iff it is a system satisfying
\begin{equation}
\label{pmapeq} 
\tag{Factor} 
a_{i'} \comp a_i \iff a_{i'} \pmap a_i
\end{equation}
for all $a_i \in \MMM_{i}$ and $a_{i'} \in \MMM_{i'}$ with $i \leq i'$. The relation $\comp$ is then an equivalence relation on a single set $\MMM_i$ with $a_i \comp b_i \iff a_i \pmap b_i \iff b_i \pmap a_i \ \text{ for } a_i, b_i \in \MMM_i$. In a prefactor system $b_{i'} \comp a_{i'} \pmap a_i$ implies $b_{i'} \pmap a_i$, but sometimes we want to have the ``dual'' property as well, which we call \emph{stability}. Although we can do without stability for most properties, it is an obvious requirement, and all natural examples satisfy this property. Note that a system that satisfies (\ref{idmapeq}) is automatically stable.

\begin{definition}
A system $(\MMM_\III,\pmap)$, and in particular the relation $\pmap$, is called \emph{stable} iff for all $i' \geq i$, all $a_{i'} \in \MMM_{i'}$ and $a_i, b_i \in \MMM_i$
\begin{equation}
\label{stabeq} 
\tag{Stab} 
a_{i'} \pmap a_i \comp b_i \ \imp \ a_{i'} \pmap b_i.
\end{equation}
\end{definition}

A family $emb = (\emb{i}{i'})_{i \leq i'}$ of $\comp$-embeddings consists of $\comp$-preserving maps $\emb{i}{i'} : \MMM_i \to \MMM_{i'}$ satisfying $\emb{i}{i}(a_i) \comp a_i$ and $\emb{i'}{i''}(\emb{i}{i'}(a_i)) \comp \emb{i}{i''}(a_i)$. The requirement $\comp$-preserving means that $\emb{i}{i'}(a_i) \comp \emb{i}{i'}(b_i)$ holds for $a_i \comp b_i$, for all $a_i, b_i \in \MMM_i$. Similar to a family of $\comp$-embeddings, $\comp$-projections $proj = (\proj{i'}{i})_{i \leq i'}$ consist of $\comp$-preserving maps $\proj{i'}{i} : \MMM_{i'} \to \MMM_i$ satisfying $\proj{i}{i}(a_i) \comp a_i$ and $\proj{i'}{i}(\proj{i''}{i'}(a_{i''})) \comp \proj{i''}{i}(a_{i''})$. Moreover, $\comp$-embeddings $emb$ together with $\comp$-projections $proj$ form an \emph{$\comp$-embedding-projection pair} iff $\proj{i'}{i}(\emb{i}{i'}(a_i)) \comp a_i$ holds for all $a_i \in \MMM_i$ and all $i \leq i'$. 

The $\comp$-embeddings $emb$ and $\comp$-projections $proj$ are \emph{coherent} if they satisfy for all indices $i \leq i' \leq i''$
\begin{align}
\label{embcond}
\tag{Emb}
a_{i'} \pmap a_i \ &\imp\ \emb{i'}{i''}(a_{i'}) \pmap a_i \text{ and} \\
\tag{Proj}
\label{projcond}
a_{i''} \pmap a_i \ &\imp\ \proj{i''}{i'}(a_{i''}) \pmap a_i \text{ resp.}
\end{align}

A $\comp$-embedding-projection pair $(emb,proj)$ is coherent if $emb$ and $proj$ are both coherent. Property (\ref{embcond}) implies that $\emb{i}{i'}(a_i) \pmap a_{i}$ holds for all $a_i \in \MMM_i$, and in case that Property (\ref{idmapeq}) holds, $\pmap$ is a \emph{partial surjection}.

\begin{definition}
A \emph{prefactor system with embeddings} is a prefactor system $(\MMM_\III,\pmap)$ with coherent $\comp$-embeddings $emb$. A \emph{prefactor system with projections} is a prefactor system $(\MMM_\III,\pmap)$ with coherent $\comp$-projections $proj$.  A \emph{factor system} is a prefactor system $(\MMM_\III,\pmap)$ with a coherent $\comp$-embedding-projection pair $(emb,proj)$. A prefactor system is \emph{direct} iff it has coherent $\comp$-embeddings which satisfy 
\begin{equation}
\label{dircond}
\tag{Dir}
a_{i'} \pmap a_i \iff a_{i'} \comp \emb{i}{i'}(a_i).
\end{equation}

A prefactor system is \emph{inverse} iff it has coherent $\comp$-projections which satisfy 
\begin{equation}
\label{invcond}
\tag{Inv}
a_{i'} \pmap a_i \iff \proj{i'}{i}(a_{i'}) \comp a_i.
\end{equation}
\end{definition}

If $\pmap$, $emb$ and $proj$ are known, we often call $\MMM_\III$ a factor system (and likewise with prefactor systems with embeddings/projections).

\begin{example}
\label{dirinvex}
The embeddings $emb$ in a direct system $(\MMM_\III, emb)$, with $a_{i'} \pmap a_i \iffdef \emb{i}{i''}(a_{i}) = \emb{i'}{i''}(a_{i'})$ for some $i'' \geq i, i'$, are automatically $\comp$-embeddings, since they preserve $\comp$. If there are $\comp$-projections $proj$, such that $emb$ and $proj$ form a $\comp$-embedding-projection pair, then $(\MMM_\III,\pmap,emb,proj)$ is a direct factor system.

An inverse system $(\MMM_\III, proj)$ satisfies (\ref{idmapeq}), so the projections $proj$ are automatically $\comp$-projections. If there are $\comp$-embeddings $emb$, such that $emb$ and $proj$ form a $\comp$-embedding-projection pair, then $(\MMM_\III,\pmap,emb,proj)$, with $a_{i'} \pmap a_i \iffdef \proj{i'}{i}(a_{i'}) = a_i$, is an inverse factor system.
\end{example}

\begin{example}
\label{natex}
Consider $(\nat_i)_{i \in \nat^+}$ with the embedding-projection pair $\emb{i}{i'} : \nat_i \to \nat_{i'}$, $n \mapsto n$ and $\proj{i'}{i} : \nat_{i'} \to \nat_i$ with $n \mapsto \min(n,i-1)$. In all three cases for $\pmap$ they form an $\comp$-embedding-projection pair and $((\nat_i)_{i \in \nat^+},\pmap,emb,proj)$ is a factor system:
\begin{enumerate}
\item The standard model of the natural numbers has $\nat_{i'} \ni n \pmap n \in \nat_i$ for all $n < i$, which is a direct factor system.
\item The definition $n' \pmap n \iffdef \proj{i'}{i}(n') = n$ makes it an inverse factor system.
\item With $n' \pmap n$ for all $n' \in \nat_{i'}$ and $n \in \nat_i$, the factor system $(\nat_i)_{i \in \nat^+}$ is direct and inverse.
\end{enumerate}

The second example basically adds an infinite number to $\nat$, while the last example is artificial, but shows the difference to direct an inverse limits when we pick up this factor system again in Example \ref{dirinvex2}. One can easily check that $(\nat_i)_{i \in \nat^+}$ is stable in all three cases, and in the first two cases Property (\ref{idmapeq}) is also satisfied.
\end{example}

\begin{lemma}
\label{dirinvlem}
Let $\MMM_\III$ be a prefactor system and given indices $i'' \geq i' \geq i$.
\begin{enumerate}
\item If $\MMM_\III$ is direct, then $a_{i''} \pmap a_i$ implies $a_{i''} \pmap \emb{i}{i'}(a_i)$.
\item If $\MMM_\III$ is inverse, then $a_{i''} \pmap a_{i'}$ implies $a_{i''} \pmap \proj{i'}{i}(a_{i'})$.
\end{enumerate}
\end{lemma}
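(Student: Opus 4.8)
The plan is to prove both parts by the same strategy, exploiting the duality between the direct and inverse cases. The key observation is that in a prefactor system the relation $\comp$ restricts to an \emph{equivalence relation} on each single level $\MMM_i$ (as noted right after Property (\ref{pmapeq})), so within one level I may freely use symmetry and transitivity of $\comp$. The two characterizing properties (\ref{dircond}) and (\ref{invcond}) let me translate the relation $\pmap$ between \emph{different} levels into a relation $\comp$ within \emph{one} level, where the equivalence structure is available. The final ingredient is the compositional coherence of the embeddings and projections, namely $\emb{i'}{i''}(\emb{i}{i'}(a_i)) \comp \emb{i}{i''}(a_i)$ and $\proj{i'}{i}(\proj{i''}{i'}(a_{i''})) \comp \proj{i''}{i}(a_{i''})$, together with the $\comp$-preserving property of projections.

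For part 1, I start from $a_{i''} \pmap a_i$ and apply (\ref{dircond}) at the levels $i \leq i''$ to obtain $a_{i''} \comp \emb{i}{i''}(a_i)$, an equation living entirely in $\MMM_{i''}$. The embedding-composition law gives $\emb{i'}{i''}(\emb{i}{i'}(a_i)) \comp \emb{i}{i''}(a_i)$, also in $\MMM_{i''}$, so by transitivity and symmetry of $\comp$ on that level I conclude $a_{i''} \comp \emb{i'}{i''}(\emb{i}{i'}(a_i))$. Reading (\ref{dircond}) backwards at the levels $i' \leq i''$, with the element $\emb{i}{i'}(a_i) \in \MMM_{i'}$ in the role of the lower witness, this is exactly $a_{i''} \pmap \emb{i}{i'}(a_i)$, which is the claim.

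For part 2, I run the dual argument. From $a_{i''} \pmap a_{i'}$, property (\ref{invcond}) at levels $i' \leq i''$ yields $\proj{i''}{i'}(a_{i''}) \comp a_{i'}$ in $\MMM_{i'}$. Applying the $\comp$-preserving map $\proj{i'}{i}$ gives $\proj{i'}{i}(\proj{i''}{i'}(a_{i''})) \comp \proj{i'}{i}(a_{i'})$ in $\MMM_i$, and the projection-composition law supplies $\proj{i'}{i}(\proj{i''}{i'}(a_{i''})) \comp \proj{i''}{i}(a_{i''})$. Transitivity and symmetry of $\comp$ on $\MMM_i$ then combine these into $\proj{i''}{i}(a_{i''}) \comp \proj{i'}{i}(a_{i'})$. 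Finally (\ref{invcond}) at levels $i \leq i''$, read backwards with $\proj{i'}{i}(a_{i'}) \in \MMM_i$ as the lower element, delivers $a_{i''} \pmap \proj{i'}{i}(a_{i'})$.

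I do not expect any deep obstacle here; the argument is pure equivalence-relation chasing once the right instances of (\ref{dircond}) and (\ref{invcond}) are chosen. The one point that requires care is \textbf{bookkeeping of the index levels}: each use of (\ref{dircond})/(\ref{invcond}) must be applied at a specific pair of levels (either $i \leq i''$ or $i' \leq i''$), and one must verify that every $\comp$-step indeed takes place within a single $\MMM_j$ so that the equivalence-relation properties are legitimate. Keeping track of which element sits in which level is the only place where a careless step could go wrong.
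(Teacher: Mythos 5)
Your proof is correct and follows essentially the same route as the paper's: translate $\pmap$ into $\comp$ via the defining biconditionals, apply the composition coherence of $emb$/$proj$, chase the equivalence relation $\comp$ within a single level, and translate back. Your part 2 coincides with the paper's argument step for step; in part 1 the paper instead closes via (Emb) and (Factor) (using $\emb{i'}{i''}(\emb{i}{i'}(a_i))$ as the witness for consistency) where you invoke the backward direction of (Dir) directly --- a harmless, arguably more uniform, variation.
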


\begin{proof}
First, $a_{i''} \pmap a_i$ implies $\emb{i'}{i''}(\emb{i}{i'}(a_i)) \comp \emb{i}{i''}(a_i) \comp a_{i''}$ by (\ref{dircond}), hence $\emb{i'}{i''}(\emb{i}{i'}(a_i)) \pmap a_{i''}$. From (\ref{embcond}) we deduce $\emb{i'}{i''}(\emb{i}{i'}(a_i)) \pmap \emb{i}{i'}(a_i)$ and thus $a_{i''} \pmap \emb{i}{i'}(a_i)$ by (\ref{pmapeq}). For the second clause let $a_{i''} \pmap a_{i'}$, then $\proj{i''}{i'}(a_{i''}) \comp a_{i'}$ by (\ref{invcond}), hence $\proj{i''}{i}(a_{i''}) \comp \proj{i'}{i}(\proj{i''}{i'}(a_{i''})) \comp \proj{i'}{i}(a_{i'})$. Consequently $a_{i''} \pmap \proj{i'}{i}(a_{i'})$ by (\ref{invcond}) again, as claimed.
\end{proof}

Compare these properties with (\ref{embcond}) and (\ref{projcond}), which hold in any prefactor system with embeddings/projections.

\begin{definition}
\label{homdef}
A \emph{homomorphism} $\Phi = (\Phi_0, (\Phi^i)_{i \in \III})$ between two systems $(\MMM_\III, \pmap)$ and $(\NNN_\JJJ, \pmap)$ consists of maps $\Phi_0 : \III \to \JJJ$ and $\Phi^i : \MMM_i \to \NNN_{\Phi_0(i)}$ such that $\Phi_0$ is monotone and for all $i \leq i'$
\begin{equation}
\label{homeq}
a_{i'} \pmap a_i \ \imp \ \Phi^{i'}(a_{i'}) \pmap \Phi^i(a_i).
\end{equation}

If (\ref{homeq}) is an equivalence, then $\Phi$ is said to be \emph{strong}. A homomorphism between two prefactor systems with embeddings $(\MMM_\III, \pmap, emb)$ additionally satisfies $\Phi^{i'}(\emb{i}{i'}(a_i)) = \emb{j}{j'}(\Phi^i(a_i))$, $j := \Phi_0(i)$, $j' := \Phi_0(i')$, a homomorphism between two prefactor systems with projections $(\MMM_\III, \pmap, proj)$ satisfies $\Phi^{i}(\proj{i'}{i}(a_{i'})) = \proj{j'}{j}(\Phi^{i'}(a_{i'}))$, and a homomorphism between two factor systems satisfies both equations.

We call $\Phi$ injective (surjective) iff every map in $\Phi$ is injective (surjective resp.). An \emph{isomorphism} $\Phi$ between two systems (prefactor systems with embeddings/projections, factor systems) is a homomorphism with a further homomorphism $\Psi$ as its inverse, i.e., each part of $\Psi$ is inverse to that of $\Phi$. So an isomorphism is automatically strong. We write $\MMM_\III \simeq \NNN_\JJJ$ if an isomorphism between $\MMM_\III$ and $\NNN_\JJJ$ exists.
\end{definition}

In the following we will use homomorphisms $\Phi$ only for the special situation that $\III = \JJJ$ and $\Phi_0$ is the identity map. In that case we do not mention $\Phi_0$.

\subsection{The Function Space}
\label{funspacefac}

The function space of two factor systems $\MMM_\III$ and $\NNN_\JJJ$, denoted as $[\MMM_\III \to \NNN_\JJJ]$, is a family of (finite) sets $[\MMM_i \to \NNN_j]$ indexed by pairs $(i,j) \in \III \times \JJJ$ with product order, whereby we write $i \to j$ for such an index in $\III \times \JJJ$. The set $[\MMM_i \to \NNN_j]$ consists of all (total) functions $f : \MMM_i \to \NNN_j$ which preserve $\comp$, i.e., which satisfy $f(a_i) \comp f(b_i)$ for $a_i \comp b_i$. If the relations $\pmap$ are partial functions on $\MMM_\III$ and $\NNN_\JJJ$, then $\comp$ is the identity on $\MMM_i$ and $\NNN_j$ and $[\MMM_i \to \NNN_j]$ simply consists of all functions from $\MMM_i$ to $\NNN_j$.

Let $f \in [\MMM_{i} \to \NNN_{j}]$, $f' \in [\MMM_{i'} \to \NNN_{j'}]$ and $i \to j \leq i' \to j'$, i.e., $i \leq i'$ and $j \leq j'$. The basic relation $\pmap$ on the function space is a logical relation \cite{plotkin1973lambda}. It is thus defined as
\begin{equation*}
f' \pmap f \ \iffdef \ a_{i'} \pmap a_i \text{ implies } f'(a_{i'}) \pmap f(a_i)
\end{equation*}
for all $a_{i'} \in \MMM_{i'}$ and $a_i \in \MMM_{i}$. The embedding-projection pair for the function space is defined in the usual way:
\begin{align*}
\emb{i \to j}{i' \to j'} : [\MMM_i \to \NNN_j] \to [\MMM_{i'} \to \NNN_{j'}] &\quad f \mapsto \emb{j}{j'} \circ f \circ \proj{i'}{i},\\
\proj{i' \to j'}{i \to j} : [\MMM_{i'} \to \NNN_{j'}] \to [\MMM_i \to \NNN_j] &\quad f' \mapsto \proj{j'}{j} \circ f' \circ \emb{i}{i'}.
\end{align*}

\begin{proposition}
\label{funspprop}
If $\MMM_\III$ and $\NNN_\JJJ$ are both factor systems, so is their function space. Moreover,
\begin{enumerate}
\item If $\pmap$ satisfies (\ref{idmapeq}) or (\ref{stabeq}) on $\NNN_\JJJ$, so does $\pmap$ on $[\MMM_\III \to \NNN_\JJJ]$.
\item If $\MMM_\III$ is inverse and $\NNN_\JJJ$ direct, then $[\MMM_\III \to \NNN_\JJJ]$ is direct.
\item If $\MMM_\III$ is direct and $\NNN_\JJJ$ inverse, then $[\MMM_\III \to \NNN_\JJJ]$ is inverse.
\end{enumerate}
\end{proposition}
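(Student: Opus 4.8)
The plan is to prove the headline claim — that $[\MMM_\III \to \NNN_\JJJ]$ is again a factor system — and then obtain the three refinements by the same bookkeeping on a single fiber. Write $c = i\to j \le c' = i'\to j'$ for indices of the function space. Reflexivity of $\pmap$ on a fiber $[\MMM_i\to\NNN_j]$ is exactly the assumption that its members preserve $\comp$, so the substance is the (\ref{pmapeq}) equivalence $f' \comp f \iff f' \pmap f$. The direction $\Leftarrow$ needs only reflexivity (take $f'$ itself as the common upper function). For $\Rightarrow$, I would unfold $f'\comp f$ to a common predecessor $h\in[\MMM_{i''}\to\NNN_{j''}]$ with $c'' \ge c'$ and $h\pmap f'$, $h\pmap f$; then, given any $a_{i'}\pmap a_i$, feed the single element $\emb{i'}{i''}(a_{i'})$ into both logical-relation hypotheses. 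The remark after (\ref{embcond}) gives $\emb{i'}{i''}(a_{i'})\pmap a_{i'}$, and (\ref{embcond}) applied to $a_{i'}\pmap a_i$ gives $\emb{i'}{i''}(a_{i'})\pmap a_i$, so $h(\emb{i'}{i''}(a_{i'}))$ is a common predecessor of $f'(a_{i'})$ and $f(a_i)$ in $\NNN_\JJJ$; thus $f'(a_{i'})\comp f(a_i)$, and (\ref{pmapeq}) in $\NNN_\JJJ$ upgrades this to $f'(a_{i'})\pmap f(a_i)$, i.e.\ $f'\pmap f$. I expect this lifting-through-an-embedding step to be the \emph{main obstacle}, as it is the one place where the embedding structure of a factor system, rather than a bare prefactor system, is genuinely needed.

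The remaining axioms — that $emb$ and $proj$ are coherent $\comp$-embeddings/-projections forming an $\comp$-embedding-projection pair — I would dispatch as routine pointwise computations. Composing out the definitions, an identity such as $\proj{c'}{c}(\emb{c}{c'}(f)) \comp f$ reduces to the corresponding identities for $\MMM_\III$ and $\NNN_\JJJ$ together with $\comp$-preservation of $f$; the coherence laws (\ref{embcond}) and (\ref{projcond}) for the function space follow by pushing (\ref{projcond}) for $\MMM_\III$ through the domain side and (\ref{embcond}) for $\NNN_\JJJ$ through the codomain side of the logical relation. A small lemma used throughout is that two functions in a common fiber that are pointwise $\comp$-consistent are already related by $\pmap$: from $\comp$-preservation and transitivity of $\comp$ on that single fiber of $\NNN_\JJJ$ one gets $f(a)\comp g(b)$ whenever $a\comp b$, and (\ref{pmapeq}) finishes.

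For item (1), if (\ref{idmapeq}) holds on $\NNN_\JJJ$ then $\comp$ is the identity on each $\NNN_j$; hence $f\pmap g$ within a fiber forces $f(a)=g(a)$ for all $a$ (put $b=a$, using reflexivity), so $\pmap$ is the identity on $[\MMM_\III\to\NNN_\JJJ]$ and (\ref{idmapeq}) is inherited. If instead (\ref{stabeq}) holds on $\NNN_\JJJ$, then from $f'\pmap f\comp g$ I read off pointwise $f'(a_{i'})\pmap f(a_i)\comp g(a_i)$, apply (\ref{stabeq}) in $\NNN_\JJJ$ to obtain $f'(a_{i'})\pmap g(a_i)$, and conclude $f'\pmap g$. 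Both arguments touch only the codomain, matching the statement.

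For items (2) and (3) I would verify the characterizations (\ref{dircond}) and (\ref{invcond}) directly, exploiting the contravariance of the first argument of the function space. Taking (2), with $\MMM_\III$ inverse and $\NNN_\JJJ$ direct, the goal is $f'\pmap f \iff f'\comp\emb{c}{c'}(f)$; I convert each $\comp$ back to $\pmap$ via (\ref{pmapeq}) and keep all consistency reasoning inside one fiber of $\NNN_\JJJ$, where $\comp$ is an equivalence, so that I never need transitivity of $\pmap$ itself. On the domain I use (\ref{invcond}) in the form $a_{i'}\pmap\proj{i'}{i}(a_{i'})$ together with $\comp$-preservation of $\proj{i'}{i}$; on the codomain I use (\ref{dircond}) in $\NNN_\JJJ$ to trade a $\pmap$-statement for a $\comp$-statement about $\emb{j}{j'}(f(\cdots))$, then slide the embedding across $f$ and the projection by $\comp$-preservation. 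Item (3) is dual, swapping the roles of $emb$ and $proj$ and of (\ref{dircond}) and (\ref{invcond}). The only delicate point is exactly this discipline of staying within a single fiber, since $\pmap$ is not assumed transitive.
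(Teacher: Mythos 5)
Your proposal is correct, and on everything the paper actually proves it follows the same route: the (\ref{stabeq}) clause via the pointwise computation $f'(a_{i'}) \pmap f(a_i) \comp g(a_i)$ followed by stability of $\NNN_\JJJ$, and items 2 and 3 by verifying (\ref{dircond}) resp.\ (\ref{invcond}) directly --- inverseness of $\MMM_\III$ on the domain side, directness of $\NNN_\JJJ$ on the codomain side, $\comp$-preservation throughout, with the consistency reasoning kept inside a single fiber where $\comp$ is an equivalence. The genuine difference is the headline claim: the paper does not prove it, but imports it (together with the (\ref{idmapeq}) clause of item 1) from \cite{eberl2023}, whereas you give a self-contained argument. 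Your key step there is sound and well placed: the nontrivial direction of (\ref{pmapeq}) is obtained by feeding the single element $\emb{i'}{i''}(a_{i'})$ into both logical-relation hypotheses, with coherence (\ref{embcond}) guaranteeing that this one element sits above both $a_{i'}$ and $a_i$; this is indeed exactly where the embedding structure, beyond a bare prefactor system, is needed. Two small cautions. First, ``common predecessor'' inverts the paper's terminology: from $h \pmap f'$ it is $f'$ that is a predecessor of $h$, so your witness is a common extension, not a common predecessor. Second, in the forward direction of item 2, make sure you apply (\ref{invcond}) right-to-left (from $\proj{i'}{i}(a_{i'}) \comp \proj{i'}{i}(b_{i'})$ infer $a_{i'} \pmap \proj{i'}{i}(b_{i'})$), as the paper does, or else stay on the diagonal and finish with your pointwise lemma; literally chaining $a_{i'} \pmap \proj{i'}{i}(a_{i'})$ with a further $\comp$-step would require stability, which is not assumed there.
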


\begin{proof}
This has been proven in \cite{eberl2023}, except for the statements about (\ref{stabeq}) and about direct and inverse factor systems. Assume $f' \pmap f \comp g$ for $f' \in [\MMM_{i'} \to \NNN_{j'}]$ and $f, g \in [\MMM_{i} \to \NNN_{j}]$, $i \to j \leq i' \to j'$, so that we have to show $f' \pmap g$. Let $a_{i'} \pmap a_i$, and to confirm that $f'(a_{i'}) \pmap g(a_i)$, use the stability condition on $\NNN_\JJJ$ applied to $f'(a_{i'}) \pmap f(a_i) \comp g(a_i)$.

Next, let $\MMM_\III$ be inverse and $\NNN_\JJJ$ direct. Assume first that $f' \pmap f$ with $f' \in [\MMM_{i'} \to \NNN_{j'}]$, $f \in [\MMM_{i} \to \NNN_{j}]$ and $i \to j \leq i' \to j'$. We wish to show that $f' \comp \emb{i \to j}{i' \to j'}(f)$, which is the same as $f' \comp \emb{j}{j'} \circ f \circ \proj{i'}{i}$. Let $a_{i'} \pmap b_{i'}$ and define $b_i := \proj{i'}{i}(b_{i'})$. Then $\proj{i'}{i}(a_{i'}) \comp b_i$ since projections preserve $\comp$. It follows that $a_{i'} \pmap b_i$ since $\MMM_\III$ is inverse, and thus $f'(a_{i'}) \pmap f(b_i)$. Consequently, $f'(a_{i'}) \comp \emb{j}{j'}(f(b_i)) = \emb{j}{j'} \circ f \circ \proj{i'}{i}(b_{i'})$, since $\NNN_\JJJ$ is direct.

For the other direction assume $f' \comp \emb{j}{j'} \circ f \circ \proj{i'}{i}$ and let $a_{i'} \pmap a_i$. We shall prove that $f'(a_{i'}) \pmap f(a_i)$. Then $a_{i'} \pmap a_i$ implies $\proj{i'}{i}(a_{i'}) \comp a_i$ since $\MMM_\III$ is inverse. Now $f$ and the embeddings preserve $\comp$, so $\emb{j}{j'}(f(\proj{i'}{i}(a_{i'}))) \comp \emb{j}{j'}(f(a_i))$. Certainly $f'(a_{i'}) \comp \emb{j}{j'}(f(a_i))$. This shows $f'(a_{i'}) \pmap f(a_i)$, since $\NNN_\JJJ$ is direct, as claimed. The proof of the last statement is verified in a similar way.
\end{proof}

\subsection{Targets and Limits}
\label{tarsec}

The subsequent concepts require sets $\fil(\III)$ from Section \ref{forminf} with Property (\ref{filtereq}), whereby we write $\fil_\rho$ for $\fil(\III_\rho)$ and $\fil_\Gamma$ for $\fil(\III_\Gamma)$. A \emph{target} $(\MMM,\Pmap)$ for a system $\MMM_\III$ extends the system ``at the top'', i.e., the extension leads to a system $\MMM_{\bar\III}$, called \emph{compactification} of $\MMM_\III$, with index set $\bar\III := \III \cup \{top\}$, $top$ as greatest index, and $\MMM_{top} = \MMM$. Let $a \Pmap a_i$ denote $a \pmap a_i$ provided that $a \in \MMM$, and we also write $Emb_i$ for $\emb{i}{top}$ and $Proj_i$ for $\proj{top}{i}$. Relation $\pmap$ on a target $\MMM$, and consequently relation $\comp$ on $\MMM$ as well, is by definition the identity. The \emph{extension} of an element $a \in \MMM$ is $Ext(a) := \{a_i \in \bigcup_{i \in \III} \MMM_i \mid a \Pmap a_i\}$. A target $\MMM$ for a system $\MMM_\III$ satisfies by definition
\[
\III_a := \{i \in \III \mid \exists a_i \in \MMM_i \ a \Pmap a_i\} \in \fil(\III)
\]
for all objects $a \in \MMM$. Moreover, if the system is a prefactor system, a prefactor system with embeddings/projections or a factor system, then the compactification $\MMM_{\bar\III}$ must have this additional structure with its properties as well. Whereas the compactification of a system is automatically a system, the compactification of a \emph{prefactor system} requires for all $a \in \MMM$, $a_i \in \MMM_i$, $a_{i'} \in \MMM_{i'}$ and $i \leq i'$
\begin{equation}
\label{ppmapeq}
a \Pmap a_{i'}, \, a \Pmap a_i \ \imp \ a_{i'} \pmap a_i.
\end{equation}

If $\MMM$ is a target for a \emph{prefactor system with embeddings} $\MMM_{\III}$, then this implies the existence of $\comp$-embeddings $Emb_i : \MMM_i \to \MMM$, satisfying $Emb_{i'}(\emb{i}{i'}(a_i)) = Emb_{i}(a_i)$ and 
\begin{equation}
\label{embtareq}
a_{i'} \pmap a_i \ \imp\ Emb_{i'}(a_{i'}) \Pmap a_i \text{ for all } i \leq i',
\end{equation}
$a_i \in \MMM_i$, $a_{i'} \in \MMM_{i'}$ and $a \in \MMM$. If $\MMM$ is a target for a \emph{prefactor system with projections} $\MMM_{\III}$, then there are moreover $\comp$-projections $Proj_i : \MMM \to \MMM_i$ such that $\proj{i'}{i}(Proj_{i'}(a)) \comp Proj_i(a)$ and 
\begin{equation}
\label{projtareq}
a \Pmap a_i \ \imp\ Proj_{i'}(a) \pmap a_i \text{ for all } i \leq i'.
\end{equation}

If $\MMM$ is a target for a \emph{factor system}, then $Emb$ and $Proj$ with these properties exist and both form a $\comp$-embedding-projection pair. We write $(\MMM,\Pmap)$, $(\MMM,\Pmap,Emb)$, $(\MMM,\Pmap,Proj)$ and $(\MMM,\Pmap,Emb,Proj)$, respectively, for these targets. If $\MMM$ is a target for a prefactor system with projections $\MMM_\III$, then the projections $Proj_i$ can be defined by 
\begin{equation}
\label{Projdef}
Proj_i(a) := \proj{i'}{i}(a_{i'}) \text{ for some } i' \geq i \text{ with } a \Pmap a_{i'}
\end{equation}
for $a \in \MMM$. It follows from the properties of a prefactor system that $Proj_i(a)$ is unique modulo $\comp$. Since $Emb_{i'}(a_{i'}) \Pmap a_{i'}$ for $a_{i'} \in \MMM_{i'}$ by (\ref{embtareq}), we have for $i \leq i'$
\begin{equation}
\label{Projprop}
Proj_i(Emb_{i'}(a_{i'})) \comp \proj{i'}{i}(a_{i'}).
\end{equation}

A target $(\MMM,\Pmap)$ for a system $\MMM_\III$ is a \emph{limit} of $\MMM_\III$ iff for every further target $(\NNN,\Qmap)$ for $\MMM_\III$ there is a unique map $\Phi : \NNN \to \MMM$ such that $a \Qmap a_i$ implies $\Phi(a) \Pmap a_i$. If the underlying system is a factor system or a prefactor system, then we call the limit \emph{factor limit} and \emph{prefactor limit}, resp. It turns out, however, that a factor limit is the same as the prefactor limit, and that this limit $\limf(\MMM_\III)$ is unique modulo isomorphism. Therefore we simply speak of a \emph{limit}, or a \emph{limit set}, if we want to distinguish it from a \emph{limit element} in this limit set.

\begin{example}
\label{dirinvex2}
Recall Example \ref{dirinvex}. If a direct system is also a direct factor system and $a_{i'} \pmap a_i \iffdef \emb{i}{i''}(a_{i}) = \emb{i'}{i''}(a_{i'})$ for some $i'' \geq i, i'$, then the factor limit is the direct limit, i.e., $\limf(\MMM_\III) = \dirlim(\MMM_\III)$. If an inverse system is additionally an inverse factor system and $a_{i'} \pmap a_i \iffdef \proj{i'}{i}(a_{i'}) = a_i$, then the factor limit $\limf(\MMM_\III)$ is the inverse limit $\invlim(\MMM_\III)$.

In the first case of Example \ref{natex} the limit is $\nat$ (which is the direct limit as well as the factor limit). In the second case the limit (inverse limit and factor limit) is $\nat_\infty := \nat \cup \{\infty\}$ with $\infty = (0,1,2,\dots)$. In the third case the factor limit is a singleton set, which is neither the direct nor the inverse limit.
\end{example}

\subsection{Consistent Sets and Dynamic Elements}
\label{limsec}

It is possible to define concrete targets and limits in the form of sets of states: A set $\alpha \subseteq \bigcup_{i \in \III} \MMM_i$ in a system $\MMM_\III$ is called a \emph{consistent set} iff $a_{i'} \pmap a_i$ holds for all $a_{i'}, a_i \in \alpha$ with $i' \geq i$, and $\III_\alpha := \{i \in \III \mid \alpha \cap \MMM_i \not= \emptyset\} \in \fil(\III)$. If we already have a target $\MMM$ for a prefactor system, then the set $Ext(a)$ is such a consistent set for all $a \in \MMM$. The set of all consistent sets in a (pre)factor system $\MMM_\III$ is itself a target for $\MMM_\III$ with $\alpha \Pmap a_i \iffdef a_i \in \alpha$. In other words, if $\MMM$ denotes the set of all consistent sets, then the target is $(\MMM, \ni)$. 

A \emph{dynamic element} is a maximal (w.r.t.~subset inclusion) consistent set, and for each consistent set $\alpha$ in a prefactor system there is exactly one dynamic element $\alpha^m$ such that $\alpha \subseteq \alpha^m$. Let $\elem(\MMM_\III)$ denote the set of all of these dynamic elements, then 
\begin{equation}
\label{elemtergeteq}
(\elem(\MMM_\III),\ni)
\end{equation}
is a prefactor limit of $\MMM_\III$. For the next two lemmata, recall that it is assumed that $\fil(\III)$ is a filter as defined in (\ref{filtereq}).

\begin{lemma}
\label{elemlem}
Let $\alpha$ be a consistent set in a prefactor system $(\MMM_\III,\pmap)$ and $b_i \in \MMM_i$, then the following are equivalent:
\begin{enumerate}
\item \label{1elemlem} $b_i \in \alpha^m$.
\item \label{2elemlem} $\alpha \cup \{b_i\}$ is a consistent set.
\item \label{3elemlem} There are cofinal many $i' \in \III$ with $a_{i'} \pmap b_i$ for some $a_{i'} \in \alpha$.
\item \label{4elemlem} $a_{i'} \pmap b_i$ for all $a_{i'} \in \alpha$ with $i' \geq i$.
\item \label{5elemlem} There are $\fil$-many $i' \in \III$ with $a_{i'} \pmap b_i$ for some $a_{i'} \in \alpha$.
\end{enumerate}
\end{lemma}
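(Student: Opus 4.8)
The plan is to prove the five conditions equivalent by first dispatching the ``definitional'' pair (\ref{1elemlem}) $\Leftrightarrow$ (\ref{2elemlem}), which uses only that $\alpha^m$ is the \emph{unique} maximal consistent set containing $\alpha$, and then running a cycle (\ref{2elemlem}) $\Rightarrow$ (\ref{4elemlem}) $\Rightarrow$ (\ref{5elemlem}) $\Rightarrow$ (\ref{3elemlem}) $\Rightarrow$ (\ref{2elemlem}) among the remaining four. For (\ref{1elemlem}) $\Rightarrow$ (\ref{2elemlem}): if $b_i \in \alpha^m$ then $\alpha \cup \{b_i\} \subseteq \alpha^m$, and any subset of a consistent set is consistent, since the defining $\pmap$-relations are inherited and the index set of $\alpha \cup \{b_i\}$ contains $\III_\alpha \in \fil(\III)$, hence lies in $\fil(\III)$ by upward closure. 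Conversely, if $\alpha \cup \{b_i\}$ is consistent, its maximal consistent superset is a maximal consistent set containing $\alpha$ and therefore coincides with $\alpha^m$ by uniqueness, so $b_i \in \alpha \cup \{b_i\} \subseteq \alpha^m$.

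The two trivial links of the cycle come next. The implication (\ref{2elemlem}) $\Rightarrow$ (\ref{4elemlem}) is immediate, because consistency of $\alpha \cup \{b_i\}$ already contains the relations $a_{i'} \pmap b_i$ for every $a_{i'} \in \alpha$ with $i' \geq i$. And (\ref{5elemlem}) $\Rightarrow$ (\ref{3elemlem}) is trivial, since every set in $\fil(\III)$ is cofinal. For (\ref{4elemlem}) $\Rightarrow$ (\ref{5elemlem}) I would set $H := \III_\alpha \cap \{i' \mid i' \geq i\}$; the second set is an up-set and hence lies in $\fil(\III)$ by~(\ref{filtereq}), and since $\fil(\III)$ is closed under intersection we get $H \in \fil(\III)$. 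Every $i' \in H$ furnishes some $a_{i'} \in \alpha \cap \MMM_{i'}$ with $a_{i'} \pmap b_i$ by (\ref{4elemlem}), so there are $\fil$-many witnessing indices. This is the one place where the full filter strength of~(\ref{filtereq}), rather than mere cofinality, is used.

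The substantive step, and the one I expect to be the main obstacle, is (\ref{3elemlem}) $\Rightarrow$ (\ref{2elemlem}): from cofinally many witnesses $a_{i'} \pmap b_i$ coming ``from above'' I must recover the \emph{full} consistency of $\alpha \cup \{b_i\}$, i.e.\ relations with every element of $\alpha$, both above and below index $i$. The key device is the (Factor) property~(\ref{pmapeq}), which converts the symmetric relation $\comp$ into $\pmap$ for comparable indices. Concretely, given $a_j \in \alpha$ with $j \geq i$, I use cofinality to pick a witness $a_{i'} \in \alpha$ with $i' \geq j$ and $a_{i'} \pmap b_i$; consistency of $\alpha$ gives $a_{i'} \pmap a_j$, so $a_{i'}$ witnesses $a_j \comp b_i$, and~(\ref{pmapeq}) yields $a_j \pmap b_i$. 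For $a_k \in \alpha$ with $k \leq i$, I pick a witness $a_{i'}$ with $i' \geq i$; then $a_{i'} \pmap a_k$ and $a_{i'} \pmap b_i$ witness $b_i \comp a_k$, and~(\ref{pmapeq}) gives $b_i \pmap a_k$. Together with $\III_{\alpha \cup \{b_i\}} \supseteq \III_\alpha \in \fil(\III)$ this establishes consistency.

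The conceptual point to watch is the interplay between cofinality and the stronger filter closure: cofinality alone suffices to run (\ref{3elemlem}) $\Rightarrow$ (\ref{2elemlem}), while~(\ref{filtereq}) is invoked only for (\ref{4elemlem}) $\Rightarrow$ (\ref{5elemlem}). It is exactly the (Factor) axiom that lets a single witness $a_{i'} \pmap b_i$ propagate to \emph{all} of $\alpha$, which is why the a priori weaker condition (\ref{3elemlem}) turns out to be equivalent to the a priori stronger condition (\ref{5elemlem}).
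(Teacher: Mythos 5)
Your proposal is correct, and it is strictly more self-contained than the paper's own proof. The paper's proof of Lemma \ref{elemlem} consists of a citation to \cite{eberl2023} for the equivalence of Clauses \ref{1elemlem}--\ref{4elemlem}, plus exactly the two links you give for the new Clause \ref{5elemlem}: it follows from Clause \ref{4elemlem} because $\III_\alpha \cap \up i \in \fil(\III)$ (a filter set intersected with an up-set), and it implies Clause \ref{3elemlem} because every set in $\fil(\III)$ is cofinal --- so on that portion your argument and the paper's coincide in substance. What you do differently is to re-prove the cited part from this paper's definitions: Clause \ref{1elemlem} $\Leftrightarrow$ Clause \ref{2elemlem} via uniqueness of the dynamic element $\alpha^m$ over a given consistent set (with the index-set condition handled by upward closure of $\fil(\III)$), Clause \ref{2elemlem} $\Rightarrow$ Clause \ref{4elemlem} by unwinding the definition of consistency, and the substantive step Clause \ref{3elemlem} $\Rightarrow$ Clause \ref{2elemlem}, where a single cofinal family of witnesses $a_{i'} \pmap b_i$ is propagated through all of $\alpha$ by combining consistency of $\alpha$ with the axiom (\ref{pmapeq}), turning the witnessed relations $a_j \comp b_i$ (for $j \geq i$) and $b_i \comp a_k$ (for $k \leq i$) into $a_j \pmap b_i$ and $b_i \pmap a_k$ respectively. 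That step is sound, and your closing observation --- that cofinality alone drives Clause \ref{3elemlem} $\Rightarrow$ Clause \ref{2elemlem} while the full strength of (\ref{filtereq}) enters only for Clause \ref{5elemlem} --- matches the paper's framing of this lemma as an extension of the corresponding lemma in \cite{eberl2023}, which used only cofinality. What each approach buys: the paper's proof is short but not verifiable without the earlier paper; yours makes the lemma checkable entirely within the present one and makes explicit that (\ref{pmapeq}) is the axiom doing the real work in the cited part.
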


\begin{proof}
This has been shown in \cite{eberl2023}, except the last clause. For its equivalence to the other statements notice that it follows from Clause \ref{4elemlem}.~since $\III_\alpha \, \cap \!\up i \in \fil(\III)$, and it implies Clause \ref{3elemlem}.~since each set in $\fil(\III)$ is cofinal.
\end{proof}

\begin{lemma}
\label{elemlem2}
Let $\alpha$ and $\beta$ be consistent sets in a prefactor system $(\MMM_\III,\pmap)$, then the following are equivalent:
\begin{enumerate}
\item \label{1elemlem2} $\alpha^m = \beta^m$.
\item \label{2elemlem2} $a_i \comp b_i$ for all $i \in \III$ with $a_i \in \alpha$ and $b_i \in \beta$.
\item \label{3elemlem2} $a_i \comp b_j$ for all $i,j \in \III$ with $a_i \in \alpha$ and $b_j \in \beta$.
\end{enumerate}
\end{lemma}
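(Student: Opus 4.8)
The plan is to run the cycle \ref{1elemlem2}$\Rightarrow$\ref{3elemlem2}$\Rightarrow$\ref{2elemlem2}$\Rightarrow$\ref{1elemlem2}. The implication \ref{3elemlem2}$\Rightarrow$\ref{2elemlem2} is immediate, being the special case $j = i$. For \ref{1elemlem2}$\Rightarrow$\ref{3elemlem2}, set $\gamma := \alpha^m = \beta^m$. Given $a_i \in \alpha$ and $b_j \in \beta$, both lie in the consistent set $\gamma$ (as $\alpha \subseteq \alpha^m$ and $\beta \subseteq \beta^m$). Since $\III_\gamma \in \fil(\III)$ is cofinal, I would pick an index $k \geq i,j$ with some $c_k \in \gamma \cap \MMM_k$; consistency of $\gamma$ then yields $c_k \pmap a_i$ and $c_k \pmap b_j$, which is exactly $a_i \comp b_j$ by the definition of consistency between states at different stages.

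The substantive direction is \ref{2elemlem2}$\Rightarrow$\ref{1elemlem2}, for which I would show $\beta \subseteq \alpha^m$; by symmetry (or directly, using that $\alpha^m$ is the unique dynamic element containing the consistent set $\beta$) this already forces $\beta^m = \alpha^m$. Fix $b_i \in \beta$. Because $\fil(\III)$ is a filter containing all up-sets, the set $\HHH := \III_\alpha \cap \III_\beta \cap \up i$ lies in $\fil(\III)$. For each $k \in \HHH$ I would choose $a_k \in \alpha \cap \MMM_k$ and $b_k \in \beta \cap \MMM_k$; hypothesis \ref{2elemlem2} gives $a_k \comp b_k$, equivalently $a_k \pmap b_k$ since both sit in $\MMM_k$, while consistency of $\beta$ together with $k \geq i$ gives $b_k \pmap b_i$. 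The prefactor property $b_{i'} \comp a_{i'} \pmap a_i \imp b_{i'} \pmap a_i$ (valid in every prefactor system), applied to $a_k \comp b_k \pmap b_i$, then delivers $a_k \pmap b_i$. Thus there are $\fil$-many $k$ with some $a_k \in \alpha$ satisfying $a_k \pmap b_i$, so Lemma \ref{elemlem} (equivalence of clauses \ref{5elemlem} and \ref{1elemlem}) gives $b_i \in \alpha^m$. As $b_i \in \beta$ was arbitrary, $\beta \subseteq \alpha^m$, and uniqueness of the dynamic element containing $\beta$ yields $\alpha^m = \beta^m$.

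The main obstacle is the step from $a_k \pmap b_k$ and $b_k \pmap b_i$ to $a_k \pmap b_i$: since $\pmap$ is \emph{not} transitive, this is not a mere chaining. It goes through only because $a_k$ and $b_k$ share the stage $k$, so $a_k \pmap b_k$ is really the symmetric relation $a_k \comp b_k$, and the built-in prefactor property absorbs this left-hand $\comp$. The other point demanding care is keeping the relevant index set inside $\fil(\III)$ rather than merely cofinal: cofinal sets are not closed under intersection, whereas I must intersect $\III_\alpha$, $\III_\beta$, and $\up i$ simultaneously. This is precisely where the filter condition (\ref{filtereq}) is used, and it is what makes this strengthened version necessary over the cofinality-only lemma of \cite{eberl2023}.
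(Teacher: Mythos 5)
Your proof is correct, and its substantive direction \ref{2elemlem2}$\Rightarrow$\ref{1elemlem2} is exactly the paper's own argument: intersect $\III_\alpha$, $\III_\beta$ and $\up i$ inside the filter $\fil(\III)$, absorb $a_k \comp b_k \pmap b_i$ into $a_k \pmap b_i$ via the built-in prefactor property, and conclude $b_i \in \alpha^m$ by Lemma \ref{elemlem}. The only difference is that you prove \ref{1elemlem2}$\Rightarrow$\ref{3elemlem2} directly from the maximal consistent set $\alpha^m = \beta^m$ (using cofinality of $\III_{\alpha^m}$ and directedness of $\III$), whereas the paper cites \cite{eberl2023} for the equivalence of \ref{1elemlem2} and \ref{3elemlem2}; your version is self-contained but otherwise the same route.
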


\begin{proof}
The equivalence of \ref{1elemlem2}.~and \ref{3elemlem2}.~has been shown in \cite{eberl2023}, so it suffices to prove that \ref{2elemlem2}.~implies $b_i \in \alpha^m$ for each $b_i \in \beta$. We wish to find cofinal many $a_{i'} \in \alpha$ with $a_{i'} \pmap b_i$ and apply Lemma \ref{elemlem}. There are $\fil$-many $i' \geq i$ with $i' \in \III_\alpha \cap \III_\beta$. For all $a_{i'} \in \alpha$, $b_{i'} \in \beta$ we have $a_{i'} \comp b_{i'}$ by assumption, hence $a_{i'} \pmap b_i$, since $b_{i'} \pmap b_i$.
\end{proof}

If one, and hence all, of the conditions in Lemma \ref{elemlem2} are true, then we write $\alpha \sim \beta$. Call an element $a \in \MMM$ of a target $\MMM$ for a prefactor system $\MMM_\III$ a \emph{limit (element)} of a consistent set $\alpha$ iff $\alpha \sim Ext(a)$. Obviously, $a$ is a limit of its extension $Ext(a)$. If there is only one limit element, we denote it as $lim(\alpha)$. 

Given a factor system $\MMM_\III$. Then $\elem(\MMM_\III)$ is, up to isomorphism, the limit $\limf(\MMM_\III)$, whereby projections have been defined by (\ref{Projdef}) and embeddings are
\begin{align}
\label{Embdef}
Emb_i(a_i) &:= \{\emb{i}{i'}(a_{i}) \in \bigcup_{j \in \III} \MMM_j \mid i' \geq i\}^m.
\end{align} 

Let $\MMM$ be a target for a system $\MMM_\III$, then $\MMM$ is \emph{maximal (over $\MMM_\III$)} iff $Ext(a) \in \elem(\MMM_\III)$ for all $a \in \MMM$. $\MMM$ is \emph{extensional (over $\MMM_\III$)} iff $Ext(a) \sim Ext(b)$ implies $a = b$ for all $a, b \in \MMM$. $\MMM$ is \emph{complete (over $\MMM_\III$)} iff for all consistent sets $\alpha$ there is a limit element $a$, i.e., some $a \in \MMM$ with $Ext(a) \sim \alpha$. One can characterize a limit $\limf(\MMM_\III)$ also as a target that is maximal, extensional and complete over $\MMM_\III$. 

\begin{proposition}
\label{invdieprop}
Assume a prefactor system $\MMM_\III$ has been compactified with a target $\MMM$, yielding the extended prefactor system $\MMM_{\bar\III}$.
\begin{enumerate}
\item If $\MMM_\III$ is stable and $\MMM$ maximal over $\MMM_\III$, then $\MMM_{\bar\III}$ is stable, too.
\item If $\MMM_\III$ is direct and $\MMM$ extensional over $\MMM_\III$, then $\MMM_{\bar\III}$ is direct and $\III_a$ contains an up-set for all $a \in \MMM$.
\item If $\MMM_\III$ is inverse and $\MMM$ maximal over $\MMM_\III$, then $\MMM_{\bar\III}$ is inverse and $\III_a = \III$ for all $a \in \MMM$.
\end{enumerate}
\end{proposition}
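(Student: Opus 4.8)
The plan is to check the three defining conditions (\ref{stabeq}), (\ref{dircond}) and (\ref{invcond}) on the compactified system $\MMM_{\bar\III}$. In each case the instances among indices of $\III$ already hold, since they hold in $\MMM_\III$, so only the instances whose largest index is $top$ are new. Throughout I will use that $\pmap$ and $\comp$ on $\MMM=\MMM_{top}$ are the identity, that a target satisfies (\ref{ppmapeq}) and carries maps $Emb_i$, $Proj_i$ obeying (\ref{embtareq}), (\ref{projtareq}), and the prefactor identity that $b_{i'}\comp a_{i'}\pmap a_i$ implies $b_{i'}\pmap a_i$. For the first clause, stability of $\MMM_{\bar\III}$ reduces to: given $a\in\MMM$ with $a\Pmap a_i\comp b_i$, show $a\Pmap b_i$. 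Since $\MMM$ is maximal, $Ext(a)$ is a dynamic element, so this amounts to $b_i\in Ext(a)^m=Ext(a)$, and I would verify clause \ref{4elemlem} of Lemma \ref{elemlem}: for each $a_{i'}\in Ext(a)$ with $i'\geq i$ consistency of $Ext(a)$ gives $a_{i'}\pmap a_i$, and then stability (\ref{stabeq}) of $\MMM_\III$ turns $a_{i'}\pmap a_i\comp b_i$ into $a_{i'}\pmap b_i$.

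For the second clause, the direction $a=Emb_i(a_i)\Rightarrow a\Pmap a_i$ of (\ref{dircond}) at $top$ is immediate from (\ref{embtareq}) with $i'=i$. The substantive direction is $a\Pmap a_i\Rightarrow a=Emb_i(a_i)$; setting $b:=Emb_i(a_i)$, I would reduce it via extensionality to $Ext(a)\sim Ext(b)$. First, $b\Pmap\emb{i}{i'}(a_i)$ for every $i'\geq i$ (reflexivity, (\ref{embtareq}), and $Emb_{i'}\circ\emb{i}{i'}=Emb_i$), so on each stage $i'\geq i$ both $Ext(b)$ and --- by (\ref{ppmapeq}) and directness (\ref{dircond}) of $\MMM_\III$ --- $Ext(a)$ consist of elements $\comp$-equivalent to $\emb{i}{i'}(a_i)$, hence they agree up to $\comp$ there. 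To obtain the full comparison of Lemma \ref{elemlem2}, for arbitrary $c_j\in Ext(a)$ and $d_k\in Ext(b)$ I would pick $i'\geq i,j,k$ in $\III_a\cap\III_b\in\fil(\III)$; representatives $c_{i'}\in Ext(a)$ and $d_{i'}\in Ext(b)$ satisfy $c_{i'}\comp d_{i'}$, and the prefactor identity turns $c_{i'}\comp d_{i'}\pmap d_k$ into $c_{i'}\pmap d_k$, so $c_{i'}$ witnesses $c_j\comp d_k$. Thus $Ext(a)\sim Ext(b)$ and $a=Emb_i(a_i)$. The up-set claim then follows: any $i\in\III_a$ gives $a=Emb_i(a_i)\Pmap\emb{i}{i'}(a_i)$ for all $i'\geq i$, so $\up i\subseteq\III_a$.

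For the third clause I would first note that an inverse prefactor system is stable (from $\proj{i'}{i}(a_{i'})\comp a_i\comp b_i$ and within-stage transitivity of $\comp$), so by the first clause $\MMM_{\bar\III}$ is stable. The crux is $a\Pmap Proj_i(a)$ for every $i\in\III$. Writing $Proj_i(a)=\proj{i'}{i}(a_{i'})$ as in (\ref{Projdef}) with $a_{i'}\in Ext(a)$, $i'\geq i$, each $a_{i''}\in Ext(a)$ with $i''\geq i'$ satisfies $a_{i''}\pmap a_{i'}$ by consistency, hence $a_{i''}\pmap\proj{i'}{i}(a_{i'})=Proj_i(a)$ by the second part of Lemma \ref{dirinvlem}; as $\III_a\cap\up i'\in\fil(\III)$, clause \ref{5elemlem} of Lemma \ref{elemlem} together with maximality yields $Proj_i(a)\in Ext(a)$. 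This immediately gives $\III_a=\III$. Finally (\ref{invcond}) at $top$ reads $a\Pmap a_i\iff Proj_i(a)\comp a_i$: the forward direction is (\ref{projtareq}) with $i'=i$, and the backward direction is stability of $\MMM_{\bar\III}$ applied to $a\Pmap Proj_i(a)\comp a_i$.

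I expect the main obstacle to be the substantive direction of the second clause, namely deriving $a=Emb_i(a_i)$ from $a\Pmap a_i$. It is the only step that forces a comparison of two complete extensions across all stages and an appeal to extensionality, and it is where the fact that $\comp$ is transitive only within a single stage makes the filter argument --- choosing a common large stage and transporting consistency down with the prefactor identity --- genuinely necessary.
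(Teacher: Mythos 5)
Your proposal is correct, and for the first two clauses it follows essentially the paper's own route: clause 1 is the same argument (you verify membership in $Ext(a)^m$ via clause \ref{4elemlem} of Lemma \ref{elemlem}, the paper via the cofinality clause), and clause 2 uses the same reduction, via extensionality, to $Ext(a) \sim Ext(Emb_i(a_i))$. In clause 2 you are in fact somewhat more careful than the paper: the paper's intermediate claim that $a \Pmap a_j$ implies $Emb_i(a_i) \Pmap a_j$ amounts to $a_j \in Ext(Emb_i(a_i))$, which as stated would need $Ext(Emb_i(a_i))$ to be maximal --- an assumption not available in clause 2, where only extensionality is hypothesized; your pairwise-consistency check against the canonical states $\emb{i}{i'}(a_i)$, feeding directly into clause \ref{3elemlem2} of Lemma \ref{elemlem2}, avoids this. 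Clause 3 is where you genuinely diverge. The paper verifies the backward direction of (\ref{invcond}) at the top index head-on: it picks cofinal many $i'' \geq i'$ with $a \Pmap a_{i''}$ and shows $a_{i''} \pmap a_i$ by the chain $\proj{i''}{i}(a_{i''}) \comp \proj{i'}{i}(\proj{i''}{i'}(a_{i''})) \comp \proj{i'}{i}(a_{i'}) \comp a_i$ together with (\ref{invcond}), and only afterwards extracts $\III_a = \III$. You instead first observe that inverse prefactor systems are stable, import clause 1 to get stability of $\MMM_{\bar\III}$, then establish the single statement $a \Pmap Proj_i(a)$ (via Lemma \ref{dirinvlem}.2 and clause \ref{5elemlem} of Lemma \ref{elemlem}), from which both $\III_a = \III$ and the backward implication (by stability applied to $a \Pmap Proj_i(a) \comp a_i$) fall out. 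Both arguments are sound and use maximality at the same point; yours is more modular, reusing clause 1 and Lemma \ref{dirinvlem} where the paper does a self-contained projection computation, at the small cost of the extra observation that (\ref{invcond}) implies (\ref{stabeq}).
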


\begin{proof}
1.~In order to show stability of $\MMM_{\bar\III}$ it suffices to prove that $a \Pmap a_i \comp b_i$ implies $a \Pmap b_i$ for all $a \in \MMM$. Indeed, there are cofinal many indices $i' \geq i$ with $a \Pmap a_{i'}$, and consequently $a_{i'} \pmap a_i$. Applying (\ref{stabeq}) yields $a_{i'} \pmap b_i$ for cofinal many indices $i' \in \III$. By Lemma \ref{elemlem} $a \Pmap b_i$ follows, since $Ext(a)$ is a maximal consistent set.

2.~We claim that $a \Pmap a_i \iff a = Emb_i(a_i)$ for $a \in \MMM$. So assume $a \Pmap a_i$ and we first establish that $a \Pmap a_j$ implies $Emb_i(a_i) \Pmap a_j$ for all $j \in \III$. There are cofinal many $i' \geq i, j$ with $a \Pmap a_{i'}$, $a_{i'} \pmap a_i$ and $a_{i'} \pmap a_j$, hence $a_{i'} \comp \emb{i}{i'}(a_i)$, because $\MMM_\III$ is direct. It follows that $\emb{i}{i'}(a_i) \pmap a_j$ for cofinal many indices $i' \in \III$ and thus $Emb_i(a_i) \Pmap a_j$. This proves $Ext(a) \sim Ext(Emb_i(a_i))$, so $a$ and $Emb_i(a_i)$ are equal since $\MMM$ is extensional over $\MMM_\III$. The inverse implication $a = Emb_i(a_i) \imp a \Pmap a_i$ holds trivially. Moreover, $a = Emb_i(a_i)$ for $a \in \MMM$ implies that $\III_a$ contains the up-set $\! \up i$.

3.~We shall prove that $a \Pmap a_i \iff Proj_i(a) \comp a_i$. For the forward implication assume $Proj_i(a) = \proj{i'}{i}(a_{i'})$ for some $a_{i'}$ with $a \Pmap a_{i'}$. We establish that $\proj{i'}{i}(a_{i'}) \comp a_i$ for $a \Pmap a_i$ by using the fact that $\MMM_\III$ is inverse and that $a \Pmap a_{i'}$ and $a \Pmap a_i$ implies $a_{i'} \pmap a_i$.

For the backward implication assume $Proj_i(a) = \proj{i'}{i}(a_{i'}) \comp a_i$. To show $a \Pmap a_i$ it suffices to find cofinal many indices $i'' \geq i'$ with $a \Pmap a_{i''}$ and $a_{i''} \pmap a_i$. To this aim, we use maximality of $\MMM$. Indeed, there are cofinal many indices $i'' \geq i'$ with $a \Pmap a_{i''}$, and to confirm that $a_{i''} \pmap a_i$ it suffices to prove $\proj{i''}{i}(a_{i''}) \comp a_i$. Now $a_{i''} \pmap a_{i'}$ is a consequence of $a \Pmap a_{i''}$ and $a \Pmap a_{i'}$, hence $\proj{i''}{i'}(a_{i''}) \comp a_{i'}$. Because $\comp$-projections preserve $\comp$, we deduce $\proj{i''}{i}(a_{i''}) \comp \proj{i'}{i}(\proj{i''}{i'}(a_{i''})) \comp \proj{i'}{i}(a_{i'}) \comp a_{i}$, as claimed. For all $i \in \III$ there is some $a_i$ with $a \Pmap a_i \comp Proj_i(a)$, hence $\III_\alpha = \III$.
\end{proof}

\begin{corollary}
\label{invdiecor}
Assume a prefactor system $\MMM_\III$ has been compactified with a limit $\MMM$, yielding the extended prefactor system $\MMM_{\bar\III}$. If $\MMM_\III$ is stable (direct, inverse), then $\MMM_{\bar\III}$ is also stable (direct, inverse resp.).
\end{corollary}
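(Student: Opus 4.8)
The plan is to deduce the corollary directly from Proposition \ref{invdieprop} by invoking the characterization of a limit that was recorded in the paragraph immediately preceding the statement. That paragraph notes that a limit $\limf(\MMM_\III)$ can equivalently be described as a target that is maximal, extensional and complete over $\MMM_\III$. So the first step I would take is to observe that, since $\MMM$ is assumed to be a limit of $\MMM_\III$ (rather than merely a target), it automatically enjoys all three of these properties; in particular it is simultaneously maximal and extensional over $\MMM_\III$.

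With that observation in hand, each of the three assertions reduces to the corresponding clause of Proposition \ref{invdieprop}, the only task being to supply the right auxiliary hypothesis about $\MMM$ in each case. For the stable case, if $\MMM_\III$ is stable, then maximality of $\MMM$ lets me apply Proposition \ref{invdieprop}.1 to conclude that $\MMM_{\bar\III}$ is stable. For the direct case, if $\MMM_\III$ is direct, then extensionality of $\MMM$ lets me apply Proposition \ref{invdieprop}.2, yielding that $\MMM_{\bar\III}$ is direct (and as a byproduct that $\III_a$ contains an up-set for every $a \in \MMM$). For the inverse case, if $\MMM_\III$ is inverse, then maximality of $\MMM$ lets me apply Proposition \ref{invdieprop}.3, so that $\MMM_{\bar\III}$ is inverse (with $\III_a = \III$ for all $a \in \MMM$).

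I do not expect a genuine obstacle here, since all of the mathematical substance has already been discharged in Proposition \ref{invdieprop}; the corollary is essentially a bookkeeping step that pairs each hypothesis with the property of a limit needed to trigger the relevant clause. The one point worth flagging is that the two different auxiliary conditions (\emph{maximal} for the stable and inverse clauses, \emph{extensional} for the direct clause) are not interchangeable at the level of general targets, so it matters that a limit is both at once. This is precisely why the result is stated for limits rather than for arbitrary targets, and why it follows as a clean corollary only once the maximal/extensional/complete characterization of $\limf(\MMM_\III)$ is available.
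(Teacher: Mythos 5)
Your proposal is correct and matches the paper's intended argument exactly: the corollary is meant to follow from Proposition \ref{invdieprop} together with the characterization of a limit as a target that is maximal, extensional and complete, which is precisely how you pair each clause (maximality for the stable and inverse cases, extensionality for the direct case) with the corresponding hypothesis. Nothing further is needed.
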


\subsection{Targets and Limits on the Function Space}
\label{moretlsec}

Consider two factor systems $\MMM_\III$ and $\NNN_\JJJ$ with limits $\MMM$ and $\NNN$ resp. Then 
\[
[\MMM \to_{\fil} \NNN] := \{f : \MMM \to \NNN \mid \III_f \in \fil(\III \times \JJJ)\}
\]
is a target for $[\MMM_\III \to \NNN_\JJJ]$, but not necessarily a limit. We now state a condition which guarantees that $[\MMM \to_{\fil} \NNN]$ is indeed a limit. So let $\zeta$ be a consistent set on the factor system $[\MMM_\III \to \NNN_\JJJ]$, and $\alpha$ a consistent set on the factor system $\MMM_\III$. Define
\begin{equation}
\label{appcond}
\tag{Appl}
App(\zeta, \alpha) := \zeta(\alpha) := \{f_{i \to j}(a_i) \mid f_{i \to j} \in \zeta \text{ and } a_i \in \alpha\}.
\end{equation}

It is easy to see that this set satisfies $b_{j'} \pmap b_j$ for all $j \leq j'$ and $b_j, b_{j'} \in \zeta(\alpha)$. However, $\zeta(\alpha)$ is not necessarily a consistent set. For that, $\III_f[\III_a] \in \fil(\JJJ)$ must be the case, whereby $\HHH[\III'] := \{j \in \JJJ \mid \exists i \in \III' \ i \to j \in \HHH\}$ for $\HHH \subseteq \III \times \JJJ$ and $\III' \subseteq \III$. This is ensured when the following condition for $\fil$ is met for all sets $\III$ and $\JJJ$:
\begin{equation}
\label{Deq}
\tag{D}
\HHH \in \fil(\III \times \JJJ) \text{ and } \III' \in \fil(\III) \text{ implies } \HHH[\III'] \in \fil(\JJJ).
\end{equation}

We will require Condition (\ref{Deq}) for the rest of this paper.

\begin{lemma}
\label{appextlem}
The application defined by (\ref{appcond}) is extensional, i.e., for all $\zeta, \zeta' \in \elem([\MMM_\III \to \NNN_\JJJ])$ we have:
\[
\zeta(\alpha) = \zeta'(\alpha) \text{ for all } \alpha \in \elem(\MMM_\III) \ \text{ implies } \ \zeta = \zeta'.
\]
\end{lemma}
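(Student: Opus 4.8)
The plan is to reduce the claim to Lemma~\ref{elemlem2}. Since $\zeta$ and $\zeta'$ are dynamic elements, hence maximal consistent sets satisfying $\zeta^m = \zeta$ and $\zeta'^m = \zeta'$, the equivalence of Clauses~\ref{1elemlem2} and~\ref{2elemlem2} of Lemma~\ref{elemlem2} tells me that $\zeta = \zeta'$ follows once I show $f_{i \to j} \comp g_{i \to j}$ for every common stage $i \to j$ and all $f_{i \to j} \in \zeta$, $g_{i \to j} \in \zeta'$. So I fix such a stage and such functions and aim at this single consistency statement.

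Next I would unfold what $\comp$ means on the function space at one stage. As $\comp$ is the equivalence relation induced by $\pmap$ and $\pmap$ is the logical relation, $f_{i \to j} \comp g_{i \to j}$ amounts to $f_{i \to j}(a_i) \comp g_{i \to j}(b_i)$ for all $a_i \comp b_i$ in $\MMM_i$. Because both $f_{i \to j}$ and $g_{i \to j}$ preserve $\comp$, this reduces to the pointwise statement $f_{i \to j}(a_i) \comp g_{i \to j}(a_i)$ for every $a_i \in \MMM_i$: given $a_i \comp b_i$, one then chains $f_{i \to j}(a_i) \comp g_{i \to j}(a_i) \comp g_{i \to j}(b_i)$ and symmetrically, recovering the full relation in both directions.

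The heart of the argument is to feed the hypothesis a well-chosen argument. For a fixed $a_i \in \MMM_i$ I would take $\alpha := Emb_i(a_i)$, the limit element generated by $a_i$ according to (\ref{Embdef}); it lies in $\elem(\MMM_\III)$ by construction, and its index set contains the up-set $\up i$, so it is a legitimate consistent set. Using the consequence of (\ref{embcond}) noted in the text, namely $\emb{i}{i'}(a_i) \pmap a_i$, I check via Lemma~\ref{elemlem} that $a_i \in \alpha$. Then, directly from the definition (\ref{appcond}) of $App$, I obtain $f_{i \to j}(a_i) \in \zeta(\alpha)$ and $g_{i \to j}(a_i) \in \zeta'(\alpha)$.

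Finally I would invoke the hypothesis $\zeta(\alpha) = \zeta'(\alpha)$: both $f_{i \to j}(a_i)$ and $g_{i \to j}(a_i)$ then belong to the single set $\zeta(\alpha)$, and both sit at stage $j$. Since $\zeta(\alpha)$ satisfies $b_{j'} \pmap b_j$ for all $b_{j'}, b_j \in \zeta(\alpha)$ with $j' \geq j$, as observed immediately after (\ref{appcond}), specializing to $j' = j$ yields $f_{i \to j}(a_i) \comp g_{i \to j}(a_i)$, which is exactly the pointwise identity needed, closing the chain of reductions. I expect the main obstacle to be the bookkeeping of the second and third paragraphs --- confirming that single-stage $\comp$ on functions genuinely collapses to the pointwise relation and verifying $a_i \in Emb_i(a_i)$ cleanly --- rather than any deep structural difficulty, since the use of the hypothesis itself is immediate once a suitable argument $\alpha$ has been supplied.
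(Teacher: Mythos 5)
Your proof is correct, but it runs the top-level reduction through a different lemma than the paper does. The paper fixes $f_{i \to j} \in \zeta$ and shows \emph{membership} $f_{i \to j} \in \zeta'$ via Lemma~\ref{elemlem}: it picks indices $i' \to j' \in \III_\zeta \cap \III_{\zeta'} \, \cap \! \up (i \to j)$ (a set in $\fil(\III \times \JJJ)$), takes $f'_{i' \to j'} \in \zeta'$, and verifies the cross-stage relation $f'_{i' \to j'} \pmap f_{i \to j}$ by unfolding the logical relation for $a_{i'} \pmap a_i$ and instantiating the hypothesis at $\alpha := Emb_{i'}(a_{i'})$. You instead reduce equality to \emph{same-stage consistency} via the equivalence of Clauses~\ref{1elemlem2} and~\ref{2elemlem2} of Lemma~\ref{elemlem2}, which lets you work entirely at a single stage $i \to j$: the function-space $\comp$ collapses to the pointwise condition $f_{i \to j}(a_i) \comp g_{i \to j}(a_i)$ (valid, since $\comp$ is an equivalence on each $\NNN_j$ and $g_{i \to j}$ preserves $\comp$), and you instantiate at $\alpha := Emb_i(a_i)$, with $a_i \in \alpha$ following from (\ref{embtareq}) at $i' = i$, or from Lemma~\ref{elemlem} as you sketch. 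The closing step is the same in both proofs: both images land in $\zeta(\alpha) = \zeta'(\alpha)$ and are forced into $\pmap$-relation by the pairwise property of $\zeta(\alpha)$ noted after (\ref{appcond}). What your route buys is the elimination of all cross-stage bookkeeping (no choice of higher indices, no explicit appeal to the filter being closed under intersections with up-sets) from the body of the argument; the price is that this machinery is not gone but repackaged inside Lemma~\ref{elemlem2}, whose Clause~\ref{2elemlem2} $\Rightarrow$ Clause~\ref{1elemlem2} direction itself rests on Lemma~\ref{elemlem} and the filter property. The two proofs are of essentially equal strength and length; yours is arguably the more readable decomposition.
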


\begin{proof}
Assume $\zeta(\alpha) = \zeta'(\alpha)$ for all $\alpha \in \elem(\MMM_\III)$ and let $f_{i \to j} \in \zeta$. We will show that $f_{i \to j} \in \zeta'$, which proves the lemma. Choose indices $i' \to j' \in \III_\zeta \cap \, \III_{\zeta'} \, \cap \! \up(i \to j) \in \fil(\III \times \JJJ)$ and let $f'_{i' \to j'} \in \zeta'$. We wish to show $f'_{i' \to j'} \pmap f_{i \to j}$. Since this is the case for $\fil$-many indices $i' \to j'$, we have shown $f_{i \to j} \in \zeta'$ by Lemma \ref{elemlem}, which suffices.

Now assume $a_{i'} \pmap a_i$ for $a_{i'} \in \MMM_{i'}$ and $a_i \in \MMM_i$, so that we have to prove $f'_{i' \to j'}(a_{i'}) \pmap f_{i \to j}(a_i)$. For $\alpha := Emb_{i'}(a_{i'}) \in \elem(\MMM_\III)$ we know that $a_i \in \alpha$, by maximality and (\ref{embtareq}). Further, $f'_{i' \to j'}(a_{i'}) \in \zeta'(\alpha) = \zeta(\alpha)$ and $f_{i \to j}(a_i) \in \zeta(\alpha)$ implies $f'_{i' \to j'}(a_{i'}) \pmap f_{i \to j}(a_i)$, because $\zeta(\alpha)$ is a consistent set.
\end{proof}

\begin{proposition}
\label{eqcondprop}
Let $\MMM$ and $\NNN$ be a targets for the factor systems $\MMM_\III$ and $\NNN_\JJJ$ resp. Assume Condition (\ref{Deq}) holds. If $\NNN$ is maximal (extensional, complete) over $\NNN_\JJJ$, so is $[\MMM \to_\fil \NNN]$ over $[\MMM_\III \to \NNN_\JJJ]$.
\end{proposition}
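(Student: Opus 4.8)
The plan is to prove the three transfer statements separately, using one bridging observation throughout. For $f \in [\MMM \to_{\fil} \NNN]$ and $a \in \MMM$, I claim that $f(a)$ is a limit element of the applied consistent set $Ext(f)(Ext(a))$ in the sense of~(\ref{appcond}). Indeed, for $f_{i \to j} \in Ext(f)$ and $a_i \in Ext(a)$ we have $f \Pmap f_{i \to j}$ and $a \Pmap a_i$, so $f(a) \Pmap f_{i \to j}(a_i)$ by the logical-relation reading of $\Pmap$ on the function space; hence $Ext(f)(Ext(a)) \subseteq Ext(f(a))$. Since $\III_f \in \fil(\III \times \JJJ)$ and $\III_a \in \fil(\III)$, Condition~(\ref{Deq}) makes $Ext(f)(Ext(a))$ a consistent set meeting $\fil$-many stages, and as the smaller set already meets $\fil$-many stages, Lemma~\ref{elemlem} shows the two sets have the same maximal closure, i.e.\ $Ext(f)(Ext(a)) \sim Ext(f(a))$. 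An analogous application of Lemma~\ref{elemlem} and Condition~(\ref{Deq}) gives $\zeta(\alpha) \sim \zeta^m(\alpha)$ for any consistent $\zeta$ on the function space, so $\zeta \sim \zeta'$ implies $\zeta(\alpha) \sim \zeta'(\alpha)$; I use this below.

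For \emph{maximality}, assume $\NNN$ is maximal and fix $f \in [\MMM \to_{\fil} \NNN]$; I would show $Ext(f)$ is a dynamic element. Given $g_{i \to j} \in (Ext(f))^m$, I must prove $f \Pmap g_{i \to j}$, i.e.\ $f(a) \Pmap g_{i \to j}(a_i)$ for every $a \Pmap a_i$. By Lemma~\ref{elemlem} there are $\fil$-many $i' \to j' \geq i \to j$ carrying some $f_{i' \to j'} \in Ext(f)$ with $f_{i' \to j'} \pmap g_{i \to j}$; intersecting with the principal up-set of $i \to j$ and projecting along $\III_a$ via Condition~(\ref{Deq}) leaves $\fil$-many such $j'$ whose witnessing $i'$ lies in $\III_a$ with $i' \geq i$. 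For each, choose $a_{i'}$ with $a \Pmap a_{i'}$; then $a_{i'} \pmap a_i$ by~(\ref{ppmapeq}), $f \Pmap f_{i' \to j'}$ gives $f(a) \Pmap f_{i' \to j'}(a_{i'})$, and $f_{i' \to j'} \pmap g_{i \to j}$ gives $f_{i' \to j'}(a_{i'}) \pmap g_{i \to j}(a_i)$. So $\fil$-many states of $Ext(f(a))$ are predecessors of $g_{i \to j}(a_i)$, and since $\NNN$ is maximal, $Ext(f(a))$ is a dynamic element, whence Lemma~\ref{elemlem} yields $g_{i \to j}(a_i) \in Ext(f(a))$. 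It is precisely here that maximality of $\NNN$ is used.

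For \emph{extensionality}, assume $\NNN$ is extensional and let $f, g \in [\MMM \to_{\fil} \NNN]$ with $Ext(f) \sim Ext(g)$. Writing $\zeta := (Ext(f))^m = (Ext(g))^m$, the bridging observation together with the remark that $\sim$-equivalent function sets have $\sim$-equivalent applications gives, for each $a \in \MMM$, the chain $Ext(f(a)) \sim Ext(f)(Ext(a)) \sim \zeta(Ext(a)) \sim Ext(g)(Ext(a)) \sim Ext(g(a))$. Extensionality of $\NNN$ then forces $f(a) = g(a)$, and since $a$ was arbitrary, $f = g$.

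For \emph{completeness}, assume $\NNN$ is complete and let $\zeta$ be a consistent set on $[\MMM_\III \to \NNN_\JJJ]$; replacing $\zeta$ by $\zeta^m$ I may take it to be a dynamic element. For each $a \in \MMM$ the set $\zeta(Ext(a))$ is consistent by Condition~(\ref{Deq}), so completeness of $\NNN$ furnishes (with a use of choice, the meta-theory being classical) a limit element, and I set $f(a) := \mathit{lim}(\zeta(Ext(a)))$. The two things to verify are that $f \in [\MMM \to_{\fil} \NNN]$, i.e.\ $\III_f \in \fil(\III \times \JJJ)$, and that $Ext(f) \sim \zeta$. Since $\zeta$ is maximal, the latter reduces to the inclusion $\zeta \subseteq (Ext(f))^m$, which by Lemma~\ref{elemlem} amounts to producing, for each $g_{i \to j} \in \zeta$, $\fil$-many genuine states of $f$ that are predecessors of $g_{i \to j}$; this simultaneously secures $\III_f \in \fil$. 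I expect this to be the main obstacle. The obvious candidate states are the $\zeta$-states themselves, but the logical-relation definition of $\Pmap$ only delivers $g_{i \to j}(a_i) \in (Ext(f(a)))^m$ via $\zeta(Ext(a)) \sim Ext(f(a))$, whereas $f \Pmap g_{i \to j}$ demands $g_{i \to j}(a_i) \in Ext(f(a))$ itself. Closing this gap between a consistent set and its maximal closure is exactly the delicate point, and it is where completeness of $\NNN$ (together with Lemma~\ref{elemlem2}, the embeddings from~(\ref{embtareq}), and the application-extensionality of Lemma~\ref{appextlem}) must be exploited to realise the needed states inside $Ext(f(a))$ rather than merely in its closure.
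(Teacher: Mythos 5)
Your bridging observation and the first two transfer statements are sound: the maximality argument is essentially the paper's proof (the paper likewise starts from the hypothesis of Lemma \ref{elemlem}, intersects $\III_f$ with $\up(i \to j)$, pushes $\III_a \, \cap \! \up i$ through Condition (\ref{Deq}), and finishes with maximality of $\NNN$), and your extensionality argument is the paper's computation repackaged through the claim $Ext(f)(Ext(a)) \sim Ext(f(a))$, which is itself correct since a consistent set and any consistent superset of it have the same maximal closure.

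The gap is exactly where you flag it, and it is not a delicate point to be pushed through but a dead end: the reduction of $Ext(f) \sim \zeta$ to producing, for each $g_{i \to j} \in \zeta$, genuine states of $f$ below $g_{i \to j}$ cannot succeed under the stated hypotheses. As you yourself observe, the construction only yields $g_{i \to j}(a_i) \in (Ext(f(a)))^m$; turning membership in the closure into membership in $Ext(f(a))$ is precisely what maximality of $\NNN$ asserts, and the completeness clause of the proposition does not assume $\NNN$ maximal (completeness does not imply maximality). The paper's proof sidesteps this by proving a different, weaker statement that still characterizes $\sim$, namely clause 2 of Lemma \ref{elemlem2}: take an arbitrary state $f_{i \to j}$ that $f$ \emph{actually has} (i.e., $f \Pmap f_{i \to j}$) and an arbitrary $g_{i \to j} \in \zeta$ at the \emph{same} index, and show $f_{i \to j} \pmap g_{i \to j}$. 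Concretely, for $b_i \pmap a_i$ one sets $a := Emb_i(a_i) = Emb_i(b_i)$; then $a \Pmap a_i$ and $a \Pmap b_i$, so $f(a) \Pmap f_{i \to j}(b_i)$, while $g_{i \to j}(a_i)$ lies in the consistent set $\beta = \zeta(Ext(a))$ whose limit defines $f(a)$; since $Ext(f(a)) \sim \beta$, Lemma \ref{elemlem2} gives $f_{i \to j}(b_i) \comp g_{i \to j}(a_i)$, hence $f_{i \to j} \pmap g_{i \to j}$, and Lemma \ref{elemlem2} then yields $Ext(f) \sim \zeta$. Because this comparison quantifies only over states $f$ already possesses, no new states of $f$ need to be manufactured and maximality of $\NNN$ is never invoked. (Your companion worry that $\III_f \in \fil(\III \times \JJJ)$ must be secured for $f$ to lie in the target $[\MMM \to_\fil \NNN]$ is legitimate and is not settled by the pairwise-consistency argument; note that the paper's own proof only verifies $\III_\beta = \III_\zeta[\III_a] \in \fil(\JJJ)$, i.e., well-definedness of each value $f(a)$, so this membership question would need a separate treatment in any case.)
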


\begin{proof}
First we shall prove the claim about maximality, so let $g_{i \to j} \in \MMM_{i \to j}$ and we apply Lemma \ref{elemlem}. Assume there is a set $\HHH \in \fil(\III \times \JJJ)$ with $f \Pmap f_{i' \to j'} \pmap g_{i \to j}$ for $i' \to j' \in \HHH$, then it suffices to show $f \Pmap g_{i \to j}$. By the definition of $\Pmap$ it must be checked that 
\[
f(a) \Pmap g_{i \to j}(a_i) \ \text{ for } \ a \Pmap a_i.
\]

There are $\fil$-many indices $i' \geq i$ with $a \Pmap a_{i'}$, since $\III_a \, \cap \! \up i \in \fil(\III)$. They satisfy $a_{i'} \pmap a_i$, because $a \Pmap a_{i'}$ and $a \Pmap a_i$. Applying Condition (\ref{Deq}) give us $\fil$-many indices $j' \in \HHH[\III_a \, \cap \! \up i]$, which all satisfy $f(a) \Pmap f_{i' \to j'}(a_{i'}) \pmap g_{i \to j}(a_i)$. So maximality of $\NNN$ yields $f(a) \Pmap g_{i \to j}(a_i)$, showing the maximality of $[\MMM \to_\fil \NNN]$.

For extensionality we prove that $f = g$, provided $Ext(f) \sim Ext(g)$. So let $a \in \MMM$ and we claim that $f(a) = g(a)$. For all $i \to j \in \III_f \cap \III_g$ and $f_{i \to j} \in Ext(f)$, $g_{i \to j} \in Ext(g)$ we have $f_{i \to j}\comp g_{i \to j}$ by definition. Condition (\ref{Deq}) shows that 
\[
\JJJ' := (\III_f \cap \III_g)[\III_a] \in \fil(\JJJ),
\]
and for all these indices $i \in \III_a$ and $j \in \JJJ'$ we have $a \Pmap a_i$ for some $a_i \in \MMM_i$,  $f(a) \Pmap f_{i \to j}(a_i)$, $g(a) \Pmap g_{i \to j}(a_i)$ and $f_{i \to j}(a_i) \comp g_{i \to j}(a_i)$. This yields $Ext(f(a)) \sim Ext(g(a))$, since the elements of $Ext(f(a))$ and $Ext(g(a))$ are consistent (i.e., in $\comp$-relation to each other) for cofinal many indices $j \in \JJJ$. We now appeal to extensionality of $\NNN$ to get $f(a) = f(b)$.

To show completeness of $[\MMM \to_\fil \NNN]$ over $[\MMM_\III \to \NNN_\JJJ]$ let $\zeta$ be a consistent set in $[\MMM_\III \to \NNN_\JJJ]$. We define a limit element $f : \MMM \to \NNN$ of $\zeta$ by
\[
f(a) := \text{ limit element of } \beta := \{g_{i \to j}(a_i) \mid g_{i \to j} \in \zeta \cap [\MMM_i \to \NNN_j] \text{ and } a \Pmap a_i\}
\]
for all $a \in \MMM$. This function is well-defined: If $a \Pmap a_i$, $a \Pmap b_i$ and $f_{i \to j}, g_{i \to j} \in \zeta \cap [\MMM_i \to \NNN_j]$, then $a_i \comp b_i$ follows and consequently $f_{i \to j}(a_i) \comp g_{i \to j}(b_i)$; note that $a_i \comp b_i \iff a_i \pmap b_i$ and the same for $f_{i \to j}(a_i) \comp g_{i \to j}(b_i)$. This implies that any two elements in $\beta$ are consistent. Condition (\ref{Deq}) guarantees that $\III_f = \III_\zeta[\III_a] \in \fil(\JJJ)$. The set $\beta$ is thus a consistent set and completeness of $\NNN$ ensures that a limit element of $\beta$ exists.

We claim that $f$ is a limit of $\zeta$. Let $f \Pmap f_{i \to j}$ and $g_{i \to j} \in \zeta$ with $f_{i \to j}, g_{i \to j} \in \MMM_{i \to j}$ and we have to show $f_{i \to j} \pmap g_{i \to j}$. Given $b_i \pmap a_i$, both in $\MMM_i$, so it suffices to show that $f_{i \to j}(b_i) \pmap g_{i \to j}(a_i)$. Take $a := Emb_i(a_i) = Emb_i(b_i)$, then $a \Pmap a_i$ and $a \Pmap b_i$, hence $f(a) \Pmap f_{i \to j}(b_i)$ and $g_{i \to j}(a_i) \in \beta$. Since $f(a)$ is a limit element of $\beta$, it follows that $\beta \sim Ext(f(a))$, and thus $f_{i \to j}(b_i) \comp g_{i \to j}(a_i)$ by Lemma \ref{elemlem2}, so we are done.
\end{proof}

Recall that limits are targets which are maximal, extensional and complete. Recall also that $\elem([\MMM_\III \to \NNN_\JJJ])$ is a limit of $[\MMM_\III \to \NNN_\JJJ]$ and that limits are unique modulo isomorphism.

\begin{corollary}
\label{limitfuncor}
Given two factor systems $\MMM_\III$ and $\NNN_\JJJ$ with limits $\MMM$ and $\NNN$ resp. Assume Condition (\ref{Deq}) holds, then $[\MMM \to_{\fil} \NNN]$ is a limit of $[\MMM_\III \to \NNN_\JJJ]$. Equivalently stated, there is an isomorphism $[\elem(\MMM_\III) \to_{\fil} \elem(\NNN_\JJJ)] \simeq \elem([\MMM_\III \to \NNN_\JJJ])$.
\end{corollary}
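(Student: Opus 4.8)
The plan is to use the characterization of a limit as a target that is maximal, extensional and complete, and to obtain each of these three properties for $[\MMM \to_\fil \NNN]$ from the corresponding property of $\NNN$ via Proposition \ref{eqcondprop}. As preliminary bookkeeping, I would first record that $[\MMM_\III \to \NNN_\JJJ]$ is itself a factor system by Proposition \ref{funspprop}, so that the notion of a limit applies to it, and that $[\MMM \to_\fil \NNN]$ is a target for it, as noted at the beginning of this subsection.

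Since $\MMM$ and $\NNN$ are limits of $\MMM_\III$ and $\NNN_\JJJ$, they are in particular maximal, extensional and complete over their respective systems, and both are in particular targets. Proposition \ref{eqcondprop} requires only that $\MMM$ be a target and that $\NNN$ carry the property in question; applying it three times --- once for maximality, once for extensionality, once for completeness --- yields that $[\MMM \to_\fil \NNN]$ is maximal, extensional and complete over $[\MMM_\III \to \NNN_\JJJ]$. By the limit characterization this is exactly what it means for $[\MMM \to_\fil \NNN]$ to be a limit of $[\MMM_\III \to \NNN_\JJJ]$, establishing the first assertion.

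For the reformulation as an isomorphism, I would specialize the concrete limits, taking $\MMM := \elem(\MMM_\III)$ and $\NNN := \elem(\NNN_\JJJ)$, which are admissible choices since $\elem(\cdot)$ is a limit of a factor system. The first part then shows that $[\elem(\MMM_\III) \to_\fil \elem(\NNN_\JJJ)]$ is a limit of $[\MMM_\III \to \NNN_\JJJ]$, while $\elem([\MMM_\III \to \NNN_\JJJ])$ is another limit of the same factor system. Since limits are unique modulo isomorphism, the two limits are isomorphic, which is the claimed isomorphism.

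I do not expect a genuine obstacle here, as all the substantive work is carried out in Proposition \ref{eqcondprop} (and, for the function-space structure, in Proposition \ref{funspprop}); the corollary is essentially an assembly of these. The only points that merit a moment's care are verifying that the hypotheses of Proposition \ref{eqcondprop} are met --- in particular that it transfers each property from the codomain target $\NNN$ alone, with $\MMM$ needed only to be a target --- and that the equivalence between ``limit'' and ``maximal, extensional and complete target'' is being invoked in the right direction.
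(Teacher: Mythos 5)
Your proposal is correct and follows exactly the route the paper intends: the corollary is stated immediately after Proposition \ref{eqcondprop} with the reminders that limits are precisely the maximal, extensional, complete targets, that $\elem([\MMM_\III \to \NNN_\JJJ])$ is a limit, and that limits are unique modulo isomorphism, which is the same assembly you perform. Your extra bookkeeping (Proposition \ref{funspprop} for the function space being a factor system, and the observation that Proposition \ref{eqcondprop} needs $\MMM$ only as a target) is accurate and consistent with the paper.
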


\section{Towards Models and Interpretations}
\label{sysemlam}

This section contains the first steps towards an interpretation of a typed $\lambda$-term in a factor system and a limit of it. Let types $\rho \in \type$ be given by $\rho ::= \iota \,\,|\,\, \rho \to \rho$ for base types $\iota$. We assume that type $\prop$ is one of these base types. For each type $\rho$ let a non-empty, directed set $\III_\rho$ of indices be given and let $\fil_\Gamma := \fil(\III_\Gamma)$ be a filter which satisfies (\ref{filtereq}). Maps on indices, such as $Emb_i$, extend in an obvious way to state contexts $C = (i_0, \dots, i_{n-1})$ as products, e.g.~$Emb_C = Emb_{i_0} \times \dots \times Emb_{i_{n-1}}$.

Given also a basic version of a simply typed $\lambda$-calculus (without constants) and for convenience we use a fixed list of variables $\yx{0},\yx{1},\yx{2},\dots$ within the Church-style $\lambda$-terms $\y{r} ::= \yx{k} \,\,|\,\, \y{r}\y{r} \,\,|\,\, \lambda \yxt{k}{\rho} \y{r}$. We apply the usual conventions for $\lambda$-terms \cite{nederpelt2014type}.

\subsection{Judgements for Types and States}
\label{typdecsec}

Typing judgements $\tdt{\y{r}}{\Gamma}{\rho}$ with $\Gamma = (\rho_0, \dots, \rho_{n-1})$ are defined recursively as follows:
\begin{align*}
&\qquad \inferrule*[Left = (Var)]{k < n}{\tdt{\yx{k}}{\Gamma}{\rho_k}} \qquad \qquad
\inferrule*[Left = (App)]{\tdt{\y{r}}{\Gamma}{\rho \to \sigma} \\ \tdt{\y{s}}{\Gamma}{\rho}}{\tdt{\y{rs}}{\Gamma}{\sigma}} \qquad \qquad \inferrule*[Left = (Abs)]{\tdt{\y{r}}{\Gamma.\rho}{\sigma}}{\tdt{\lambda \yxt{n}{\rho} \y{r}}{\Gamma}{\rho \to \sigma}} 
\end{align*}

We can write more explicitly $(\yx{0} : \rho_0, \dots, \yx{n-1} : \rho_{n-1}, \yx{n} : \rho_n)$ for the context $\Gamma.\rho_n$, so $\lambda$-abstractions binds the last variable of the context.\footnote{This style is basically the de Bruijn level notation of $\lambda$-terms, see e.g.~\cite{Plotkin1999}.} When we speak of typed terms this refers to this typing judgement. It is well known that each term $\y{r}$ in the simply typed $\lambda$-calculus has a unique type for a given context $\Gamma$.

\begin{definition}
\label{posnegdef}
Let $\iota$ denote a base type. The \emph{positive} and \emph{negative} types are defined as 
\[
\type^+ \ni \rho^+ ::= \iota \,\,|\,\, \rho^- \to \rho^+ \quad \text{ and } \quad
\type^- \ni \rho^- ::= \prop \,\,|\,\, \rho^+ \to \rho^-.
\]

For an index $i \in \III_\rho$ with $\rho \in \type^+$ we write $i \in Idx^+$, and similar for $i \in Idx^-$. Moreover, $\type^c \ni \rho^c ::= \iota \,\,|\,\, \rho^+ \to \rho^c \,\,|\,\, \rho^- \to \rho^c$.
\end{definition}

For example, $nat$ is a positive type, $nat \to nat \to \prop$ is a negative type, $\prop \to \prop$ is both and $nat \to nat$ is neither. Obviously $\type^+ \cup \type^- \subseteq \type^c$ and $\rho \in \type^{+} \cap \type^{-} \iff \rho ::= \prop \ |\  \rho \to \rho$. In order to formulate the rules for states, we use a fragment of the simply typed $\lambda$-calculus, which we refer to as the \emph{core fragment}. This fragment is constrained in such a way that contexts $\Gamma = (\rho_0, \dots, \rho_{n-1})$ contain only positive or negative types, that is, 
\begin{equation}
\label{restconteq}
\rho_k \in \type^+ \cup \type^- \text{ for all } \rho_k \text{ in } \Gamma.
\end{equation}

A more general definition would permit the use of further base types in place of the negative type $\prop$. This requires that the base type in question has a finite set of objects and an interpretation analogous to that of type $\prop$ in Definition \ref{typedef}. The primary motivation for this definition is the existence of suitable rules for state judgements of variables and the fact that positive types can be interpreted as direct factor systems and negative types as inverse factor systems.

\begin{lemma}
The typed term $\tdt{\y{r}}{\Gamma}{\rho}$ is in the core fragment iff $\Gamma \to \rho \in \type^c$.
\end{lemma}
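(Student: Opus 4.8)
The plan is to prove both directions simultaneously by induction on the derivation of $\tdt{\y{r}}{\Gamma}{\rho}$, after first peeling the condition $\Gamma\to\rho\in\type^c$ apart into a constraint on $\Gamma$ and one on $\rho$. Writing $\Gamma=(\rho_0,\dots,\rho_{n-1})$ and reading $\Gamma\to\rho$ as $\rho_0\to\cdots\to\rho_{n-1}\to\rho$, I would unfold the grammar of $\type^c$ along this chain of arrows. Since each arrow clause of $\type^c$ is either $\rho^+\to\rho^c$ or $\rho^-\to\rho^c$, peeling off $\rho_0,\dots,\rho_{n-1}$ shows
\[
\Gamma\to\rho\in\type^c \iff \rho_k\in\type^+\cup\type^-\ \text{for all}\ k<n\ \text{ and }\ \rho\in\type^c .
\]
The first conjunct is exactly the core-fragment constraint (\ref{restconteq}) on $\Gamma$, so the task reduces to matching membership in the core fragment against the single extra clause $\rho\in\type^c$. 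Two facts recorded after Definition \ref{posnegdef} will be used repeatedly: $\type^+\cup\type^-\subseteq\type^c$, and a base type $\iota$ lies in $\type^+$. I would also isolate the one-step unfolding $\rho\to\sigma\in\type^c\iff\rho\in\type^+\cup\type^-\ \text{and}\ \sigma\in\type^c$, which drives the inductive step.

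The induction then runs over the three rules, reading ``in the core fragment'' as ``every context occurring in the derivation satisfies (\ref{restconteq})''. For (Var) the term is $\yx{k}$ with $\rho=\rho_k$, so the sole context is $\Gamma$; if $\Gamma$ satisfies (\ref{restconteq}) then $\rho_k\in\type^+\cup\type^-\subseteq\type^c$, making the clause $\rho\in\type^c$ automatic, and both sides collapse to ``$\Gamma$ is positive or negative''. For (Abs), $\tdt{\y{r}}{\Gamma.\rho}{\sigma}$ yields $\tdt{\lambda\yxt{n}{\rho}\y{r}}{\Gamma}{\rho\to\sigma}$; the derivation stays in the core fragment exactly when the extended context $\Gamma.\rho$ does, that is, when $\Gamma$ is positive/negative and $\rho\in\type^+\cup\type^-$, and by the induction hypothesis applied to $\y{r}$ this further forces $\sigma\in\type^c$. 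By the one-step unfolding above this is precisely $\rho\to\sigma\in\type^c$ together with $\Gamma$ positive/negative, so the two sides agree.

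The decisive case is (App): from $\tdt{\y{s}}{\Gamma}{\rho\to\sigma}$ and $\tdt{\y{t}}{\Gamma}{\rho}$ one forms $\tdt{\y{st}}{\Gamma}{\sigma}$, and the whole derivation lies in the core fragment iff both subderivations do. By the induction hypothesis this amounts to $\Gamma\to(\rho\to\sigma)\in\type^c$ and $\Gamma\to\rho\in\type^c$ jointly, i.e.\ $\Gamma$ positive/negative, $\rho\in\type^+\cup\type^-$ and $\sigma\in\type^c$, which one must reconcile with the target $\Gamma\to\sigma\in\type^c$. The forward direction is immediate: if both subterms are in the core fragment then $\sigma\in\type^c$ follows at once, using $\type^+\cup\type^-\subseteq\type^c$ to absorb the argument premise. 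The subtle point — and what I expect to be the main obstacle — is the reverse direction, where membership of the argument type $\rho$ in $\type^+\cup\type^-$ must be recovered; this type is ``hidden'', in that it does not occur in $\Gamma\to\sigma$, and is controlled only through the function premise $\y{s}:\rho\to\sigma$. The design of $\type^c$, whose every argument slot is forced into $\type^+\cup\type^-$, is exactly what is meant to pin $\rho$ down, so the argument must be organised so that the function subterm, rather than the bare target type, carries the positivity/negativity of $\rho$. Isolating this dependence cleanly — for instance by working with the normal forms inhabiting the fragment, where an application never has in head position a $\lambda$-abstraction over a type outside $\type^+\cup\type^-$ (such as $nat\to nat$) — is the crux of the proof.
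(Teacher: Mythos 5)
Your treatment of (Var), (Abs), and the forward half of (App) is correct, and it is exactly the induction that the paper's own proof---a single sentence, ``follows easily by induction on $\tdt{\y{r}}{\Gamma}{\rho}$''---gestures at. But the reverse half of (App), which you leave open as ``the crux'', is a genuine gap, and under the reading you adopt (every context occurring in the typing derivation satisfies (\ref{restconteq})) it cannot be closed by any reorganisation of the induction, because the right-to-left implication is false once $\beta$-redexes are present. Concretely, let $\y{t} := (\lambda \yxt{0}{nat\to nat}\,\yx{0})\,(\lambda \yxt{0}{nat}\,\yx{0})$, so that $\tdt{\y{t}}{\nil}{nat\to nat}$. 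Then $\nil \to (nat\to nat) = nat\to nat \in \type^c$, but the derivation passes through the judgement $\tdt{\yx{0}}{(nat\to nat)}{nat\to nat}$, whose context consists of the type $nat\to nat$, which the paper itself notes is neither positive nor negative. The function subterm of an application may abstract over an arbitrary type, and nothing in $\Gamma\to\sigma\in\type^c$ constrains it---this is precisely the ``hidden type'' you identified, and it is not merely an obstacle to one proof strategy but a counterexample to the implication itself.

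Your closing suggestion---working with $\beta$-normal terms---is in fact the correct repair: in a normal term every application spine is headed by a variable, the head's type lies in $\type^+\cup\type^-$ by the context constraint, and unfolding the grammars of $\type^+$ and $\type^-$ forces each argument type along the spine back into $\type^+\cup\type^-$, so the induction closes. But that proves a statement about normal forms, not the lemma as written, which quantifies over all typed terms. (The weaker reading of ``core fragment'', constraining only the outer context $\Gamma$, fares no better: there the left-to-right direction fails, e.g.\ for $\tdt{\lambda\yxt{1}{nat\to nat}\,\yx{0}}{(nat)}{(nat\to nat)\to nat}$.) So the honest assessment is this: the cases you completed are all that the paper's ``easy induction'' actually delivers, namely the left-to-right direction; the case you flagged is exactly where that one-line proof is unjustified; and your suspicion should be upgraded to an explicit counterexample rather than to a cleverer induction.
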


This lemma follows easily by induction on $\tdt{\y{r}}{\Gamma}{\rho}$. Therein $\Gamma \to \sigma$ is defined by $\nil \to \sigma := \sigma$ and $\Gamma.\rho \to \sigma := \Gamma \to (\rho \to \sigma)$; the index $C \to i$ is defined in the same way as $\Gamma \to \sigma$. We now introduce judgements for states of the form $\tdts{\y{r}}{C}{i}$. They require a typing judgement $\tdt{\y{r}}{\Gamma}{\rho}$ such that $C \to i \in \III_{\Gamma \to \rho}$. The rules are based on the typing rules\footnote{One may also combine both rules into one. The rules are however more readable if one keeps them separated.}, whereby $C = (i_0, \dots,i_{n-1})$.
\begin{align*}
&&&\inferrule*[Left = (Var$+$)]{j \geq i_k \in Idx^+}{\tdts{\yx{k}}{C}{j}} &&\inferrule*[Left = (Var$-$)]{j \leq i_k \in Idx^-}{\tdts{\yx{k}}{C}{j}} \\
&&&\inferrule*[Left = (App)]{\tdts{\y{r}}{C}{i \to j} \\ \tdts{\y{s}}{C}{i}}{\tdts{\y{rs}}{C}{j}} &&\inferrule*[Left = (Abs)]{\tdts{\y{r}}{C.i}{j}}{\tdts{\lambda \yxt{n}{\rho} \y{r}}{C}{i \to j}} 
\end{align*}

The condition $j \geq i_k$ in Rule (\textsc{Var$+$}) is a consequence of the fact that objects $a_{i_k}$ of positive type exist at all later stages $j$ as $\emb{i_k}{j}(a_{i_k})$. The condition $j \leq i_k$ in Rule (\textsc{Var$-$}) is related to the fact that we find unique restrictions of a relation, but not unique extensions. Note that for an index $i_k \in Idx^{+} \cap Idx^{-}$ both of the rules for variables apply. In this case, however, there is only one index, so both rules are the same.

A term usually has several state judgement, not only one. Indeed, we expect that there are $\fil$-many of them and it seems necessary to prove this stronger property in order to guarantee that at least one judgement exists. However, we have not shown that all terms of the core fragment have a state judgement. It is possible to define sets $\fil_\Gamma$, similar as in \cite{eberl2022RML}, but with a reorder of the context $\Gamma$ that takes the positive types first, and afterward the negative ones. In a subsequent paper we will present such a definition of sets $\fil_\Gamma$ satisfying Condition (\ref{Deq}) and a proof that each term in the above mentioned fragment has indeed $\fil$-many state judgements.

\subsection{Interpretation of Types}
\label{typeintsec}

An interpretation of a type $\rho$ has a static and a dynamic part. Although the limit is uniquely defined from a structural perspective, the extension, i.e., the set of elements in the limit, depends on the stage of the meta-level investigation, assuming a consequent finitistic view. In this case, the limit is not the absolute end of the extensible system. As the investigation progresses, both the system and the limit increase. The main part of the interpretation is the factor system --- the limit is necessary to prove that the definition in the factor system is correct.

Let $(\val{\rho}_i)_{i \in \III_\rho}$ be the factor system that interprets type $\rho$, which consists of finite sets $\val{\rho}_i$. The limit, up to isomorphism, is then $\val{\rho} := \elem((\val{\rho}_i)_{i \in \III_\rho})$, see Section \ref{limsec}. These limit sets give rise to an extensional type structure if we use (\ref{appcond}) as application. Recall Lemma \ref{appextlem} and our assumption of Condition (\ref{Deq}).

\begin{definition}
\label{typedef}
An interpretation $\val{\ }$ of types assigns to a type $\rho$ a factor system $(\val{\rho}_i)_{i \in \III_\rho}$ and a limit $\val{\rho}$ of it. So the pair $((\val{\rho}_i)_{i \in \III_\rho}, \val{\rho})$ interprets the type $\rho$. This interpretation shall satisfy the following properties:
\begin{enumerate}
\item For a base type $\iota$, $(\val{\iota}_i)_{i \in \III_\iota}$ is a direct factor system.
\item Type $\prop$ is interpreted by the factor system $\bool_{\{\prop\}}$ with $\bool_{\prop} = \bool = \{true, false\}$ and limit $\bool$. 
\item The interpretation of $\rho \to \sigma$ is the factor system $[(\val{\rho}_i)_{i \in \III_\rho} \to (\val{\sigma}_j)_{j \in \III_\sigma}]$ together with the limit $[\val{\rho} \to_\fil \val{\sigma}]$.
\end{enumerate}
\end{definition}

Since Condition (\ref{Deq}) holds, it follows from Corollary \ref{limitfuncor} that the function space $[\val{\rho} \to_\fil \val{\sigma}]$ is indeed the limit of the underlying factor system $[(\val{\rho}_i)_{i \in \III_\rho} \to (\val{\sigma}_j)_{j \in \III_\sigma}]$. Definition \ref{typedef} extends to contexts $\Gamma$ in the usual way by taking products.

\begin{example}
The standard interpretation of the natural numbers $\val{nat} = \nat$ is the limit of the direct factor system $(\val{nat}_i)_{i \in \nat^+}$ with $\val{nat}_i = \nat_i$.\footnote{Here we assume that the interpretation of type $nat$ is the set $\nat$ of all natural numbers, which is an actual infinite set if one accepts actual infinities at meta-level. If not, $\nat$ is an arbitrary large finite set, which corresponds to a set in the implicitly given factor system at meta-level, i.e., $\nat$ is $\nat_j$ for some sufficiently large index $j$ depending on the stages of the meta-level investigation.} Of course, there are other non-standard models of $nat$ with non-standard natural numbers as well. These non-standard numbers do not appear at the limit step, but at some stage within the system.
\end{example}

The next lemma is required for the reflection principle stated in Theorem \ref{mainthm}. Its proof proceeds by induction on Definition \ref{posnegdef} and uses Proposition \ref{funspprop} and Corollary \ref{invdiecor}.

\begin{lemma}
\label{embpmaplem}
Given an interpretation of types $\val{\ }$ and let $i \leq i'$ with $i, i' \in \III_\rho$ for a type $\rho$. 
\begin{enumerate}
\item If $\rho \in \type^{+}$, then the factor system $(\val{\rho}_i)_{i \in \III_\rho}$ is direct and $a \Pmap a_i$ implies $a \Pmap \emb{i}{i'}(a_i)$.
\item If $\rho \in \type^{-}$, then the factor system is inverse and $a \Pmap a_{i'}$ implies $a \Pmap \proj{i'}{i}(a_{i'})$.
\end{enumerate}
\end{lemma}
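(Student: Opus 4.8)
The plan is to run a simultaneous structural induction on Definition \ref{posnegdef}, proving the positive claim for every $\rho \in \type^{+}$ and the negative claim for every $\rho \in \type^{-}$ at the same time, since the grammar builds positive types out of negative ones and vice versa. I would split each claim into two halves: first that the factor system $(\val{\rho}_i)_{i \in \III_\rho}$ is direct (resp.\ inverse), and second the implication concerning $\emb{i}{i'}$ (resp.\ $\proj{i'}{i}$). The second half follows uniformly once the first is in place, so the real content is establishing directedness and inverseness.

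For that first half, the base cases are immediate: a base type $\iota$ is interpreted by a direct factor system by Definition \ref{typedef}(1), and the single-index system $\bool_{\{\prop\}}$ interpreting $\prop$ is trivially both direct and inverse, consistent with $\prop \in \type^{+} \cap \type^{-}$. For the inductive step I would invoke Proposition \ref{funspprop}: a positive type $\rho^- \to \rho^+$ has an inverse domain system and a direct codomain system by the induction hypothesis, so clause (2) of Proposition \ref{funspprop} makes the function space direct; dually, a negative type $\rho^+ \to \rho^-$ has a direct domain and an inverse codomain, so clause (3) makes the function space inverse. This settles directedness/inverseness for the system itself.

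It then remains to transfer these properties to the compactified system that includes the limit. Here I would appeal to Corollary \ref{invdiecor}: since $\val{\rho}$ is the limit of $(\val{\rho}_i)_{i \in \III_\rho}$, the extended system $\val{\rho}_{\bar\III}$, with $top$ as greatest index and $\val{\rho}_{top} = \val{\rho}$, is direct whenever the underlying system is and inverse whenever it is. The desired implication is then exactly the instance of Lemma \ref{dirinvlem} in which the top index plays the role of $i''$: for $a = a_{top} \in \val{\rho}$, clause (1) turns $a \Pmap a_i$ into $a \Pmap \emb{i}{i'}(a_i)$ in the positive case, and clause (2) turns $a \Pmap a_{i'}$ into $a \Pmap \proj{i'}{i}(a_{i'})$ in the negative case.

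I expect the main obstacle to be purely organizational: keeping the two polarities synchronized through the mutual induction and matching the domain/codomain polarities against the correct clause of Proposition \ref{funspprop}. Everything beyond that reduces to quoting Corollary \ref{invdiecor} and specializing Lemma \ref{dirinvlem} to $i'' = top$, so no genuinely new argument is needed past the machinery already established.
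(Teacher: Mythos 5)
Your proposal is correct and takes essentially the same approach as the paper, whose entire proof is the one-sentence remark that the lemma ``proceeds by induction on Definition \ref{posnegdef} and uses Proposition \ref{funspprop} and Corollary \ref{invdiecor}'' --- exactly your mutual induction on polarities (base types direct by Definition \ref{typedef}, $\prop$ trivially direct and inverse, function-space step via clauses (2) and (3) of Proposition \ref{funspprop}) followed by transfer to the compactified system. Your explicit appeal to Lemma \ref{dirinvlem} with $i'' = top$ to extract the two implications is just the natural spelling-out of the step the paper leaves implicit, and it is sound since Corollary \ref{invdiecor} makes the compactified system a direct (resp.\ inverse) prefactor system to which that lemma applies.
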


\subsection{Interpretation of Terms and a First Version of the Reflection Principle}
\label{inttermprelimsec}

The interpretation of terms in the type structure of the limit sets is the common interpretation. What is new is the interpretation in the factor system. This part of the interpretation requires a state judgement, so it is defined only for the core fragment, defined in Section \ref{typdecsec}.

\begin{definition}
A $\Gamma$-environment is a list of elements $\vecb{A} \in \val{\Gamma} := \val{\rho_0} \times \dots \times \val{\rho_{n-1}}$ for $\Gamma = (\rho_0, \dots, \rho_{n-1})$. The value $\val{\y{r}}_{\vecb{A}} \in \val{\rho}$ of a typed term $\tdt{\y{r}}{\Gamma}{\rho}$ is defined recursively on the derivation of the judgement $\tdt{\y{r}}{\Gamma}{\rho}$ relative to a $\Gamma$-environment $\vecb{A} \in \val{\Gamma}$:
\begin{align*}
\val{\yx{k}}_{\vecb{A}} &:= A_k &&\text{for } \tdt{\yx{k}}{\Gamma}{\rho_k}, \\
\val{\y{rs}}_{\vecb{A}} &:= \val{\y{r}}_{\vecb{A}}(\val{\y{s}}_{\vecb{A}}) && \text{for } \tdt{\y{r}}{\Gamma}{\rho \to \sigma} \text{ and } \tdt{\y{s}}{\Gamma}{\rho},\\
\val{\lambda \yxt{n}{\rho} \y{r}}_{\vecb{A}}(B) &:= \val{\y{r}}_{\vecb{A}.B} && \text{for } \tdt{\y{r}}{\Gamma.\rho}{\sigma} \text{ and } B \in \val{\rho}.
\end{align*}
\end{definition}

\begin{definition}
\label{facttermdef}
A $C$-environment is a list of elements $\vecb{a} \in \val{\Gamma}_C := \val{\rho_0}_{i_0} \times \dots \times \val{\rho_{n-1}}_{i_{n-1}}$ for $C = (i_0, \dots, i_{n-1})$. The value $\val{\y{r}}_{\vecb{a} : C}^i \in \val{\rho}_i$ of a term with $\tdts{\y{r}}{C}{i}$ relative to a $C$-environment $\vecb{a} \in \val{\Gamma}_C$ is defined recursively on the derivation of the state judgement $\tdts{\y{r}}{C}{i}$:
\begin{align*}
\val{\yx{k}}_{\vecb{a} : C}^j &:= \emb{i_k}{j}(a_k) &&\text{for } \tdts{\yx{k}}{C}{i_k} \text{ and } \rho_k \in \type^+, \\
\val{\yx{k}}_{\vecb{a} : C}^j &:= \proj{i_k}{j}(a_k) &&\text{for } \tdts{\yx{k}}{C}{i_k} \text{ and } \rho_k \in \type^-, \\
\val{\y{rs}}_{\vecb{a} : C}^j &:= \val{\y{r}}_{\vecb{a} : C}^{i \to j}(\val{\y{s}}_{\vecb{a} : C}^i) && \text{for } \tdts{\y{r}}{C}{i \to j} \text{ and } \tdts{\y{s}}{C}{i},\\
\val{\lambda \yxt{n}{\rho} \y{r}}_{\vecb{a} : C}^{i \to j}(b) &:= \val{\y{r}}_{\vecb{a}.b : C.i}^j && \text{for } \tdts{\y{r}}{C.i}{j} \text{ and } b \in \val{\rho}_i.
\end{align*}
\end{definition}

By definition, the value $\val{\y{r}}_{\vecb{a} : C}^i$ depends on the way the judgement $\tdts{\y{r}}{C}{i}$ has been derived. So different derivations could lead to different values, making this definition incorrect. The value $\val{\y{rs}}_{\vecb{a} : C}^j$ seemingly depend on the chosen index $i$ used in $\val{\y{s}}_{\vecb{a} : C}^i$. A consequence of the main Theorem \ref{mainthm} is that the value $\val{\y{r}}_{\vecb{a} : C}^i$ is indeed independent (modulo $\comp$) of the derivation of the state judgement $\tdts{\y{r}}{C}{i}$. 

\begin{definition}
\label{factlimtermdef}
Given a pair $(\vecb{A},\vecb{a})$ with a $\Gamma$-environment $\vecb{A}$ and a $C$-environment $\vecb{a}$ such that $\vecb{A} \Pmap \vecb{a}$. The interpretation of a typed term $\tdt{\y{r}}{\Gamma}{\rho}$ with state judgement $\tdts{\y{r}}{C}{i}$ relative to $(\vecb{A},\vecb{a})$ is the pair $(\val{\y{r}}_{\vecb{A}},\val{\y{r}}_{\vecb{a} : C}^i) \in \val{\rho} \times \val{\rho}_i$.
\end{definition}

One would expect that $\vecb{A} \Pmap \vecb{a}$ implies $\val{\y{r}}_{\vecb{A}} \Pmap \val{\y{r}}_{\vecb{a} : C}^i$. Indeed, this is the content of the next theorem, which is also the main theorem.

\begin{theorem}
\label{mainthm}
Given a typed term $\tdt{\y{r}}{\Gamma}{\rho}$ with $\tdts{\y{r}}{C}{i}$. If $\vecb{A} \in \val{\Gamma}$, $\vecb{a} \in \val{\Gamma}_C$ are variable assignments with $\vecb{A} \Pmap \vecb{a}$, then 
\[
\val{\y{r}}_{\vecb{A}} \Pmap \val{\y{r}}_{\vecb{a} : C}^i.
\]
\end{theorem}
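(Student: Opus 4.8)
The plan is to prove the statement by structural induction on the derivation of the state judgement $\tdts{\y{r}}{C}{i}$, which runs in parallel with the recursive definition of both interpretations $\val{\y{r}}_{\vecb{A}}$ and $\val{\y{r}}_{\vecb{a} : C}^i$. There are three cases to treat: variables, application, and abstraction. Throughout, the base assumption $\vecb{A} \Pmap \vecb{a}$ means componentwise $A_k \Pmap a_k$ for each $k < n$.

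For the \textbf{variable case}, the term is $\yx{k}$ with $\val{\yx{k}}_{\vecb{A}} = A_k$ and $\val{\yx{k}}_{\vecb{a} : C}^j$ equal to either $\emb{i_k}{j}(a_k)$ (when $\rho_k \in \type^+$) or $\proj{i_k}{j}(a_k)$ (when $\rho_k \in \type^-$). This is exactly where Lemma \ref{embpmaplem} does the work. From $A_k \Pmap a_k$ in the positive case, and using that $\rho_k$ is a positive type so its factor system is direct, Lemma \ref{embpmaplem}.(1) yields $A_k \Pmap \emb{i_k}{j}(a_k)$; in the negative case, where the system is inverse, Lemma \ref{embpmaplem}.(2) yields $A_k \Pmap \proj{i_k}{j}(a_k)$. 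I would first check that the side conditions $j \geq i_k$ (Var$+$) and $j \leq i_k$ (Var$-$) match the direction required by the respective clause of the lemma. For the \textbf{application case}, I have $\tdts{\y{r}}{C}{i \to j}$ and $\tdts{\y{s}}{C}{i}$, and the induction hypotheses $\val{\y{r}}_{\vecb{A}} \Pmap \val{\y{r}}_{\vecb{a} : C}^{i \to j}$ and $\val{\y{s}}_{\vecb{A}} \Pmap \val{\y{s}}_{\vecb{a} : C}^i$. Writing $f := \val{\y{r}}_{\vecb{A}}$, $f_{i \to j} := \val{\y{r}}_{\vecb{a} : C}^{i \to j}$, $a := \val{\y{s}}_{\vecb{A}}$ and $a_i := \val{\y{s}}_{\vecb{a} : C}^i$, the goal $\val{\y{rs}}_{\vecb{A}} \Pmap \val{\y{rs}}_{\vecb{a} : C}^j$ unfolds to $f(a) \Pmap f_{i \to j}(a_i)$. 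Since the limit function space uses the logical relation $\Pmap$, the relation $f \Pmap f_{i \to j}$ literally states that $a \Pmap a_i$ implies $f(a) \Pmap f_{i \to j}(a_i)$, so this case closes by instantiating the logical-relation definition with the second induction hypothesis.

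For the \textbf{abstraction case}, the term is $\lambda \yxt{n}{\rho} \y{r}$ with $\tdts{\y{r}}{C.i}{j}$, and I must show $\val{\lambda \yxt{n}{\rho} \y{r}}_{\vecb{A}} \Pmap \val{\lambda \yxt{n}{\rho} \y{r}}_{\vecb{a} : C}^{i \to j}$. Again by the logical-relation definition on the function space, this amounts to showing that for every $B \in \val{\rho}$ and $b \in \val{\rho}_i$ with $B \Pmap b$ we have $\val{\lambda \yxt{n}{\rho} \y{r}}_{\vecb{A}}(B) \Pmap \val{\lambda \yxt{n}{\rho} \y{r}}_{\vecb{a} : C}^{i \to j}(b)$. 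Unfolding both sides via the $\lambda$-clauses of the two definitions reduces this to $\val{\y{r}}_{\vecb{A}.B} \Pmap \val{\y{r}}_{\vecb{a}.b : C.i}^j$, which is precisely the induction hypothesis applied to the extended environments $\vecb{A}.B$ and $\vecb{a}.b$ over the extended context $\Gamma.\rho$ and state context $C.i$, using $\vecb{A}.B \Pmap \vecb{a}.b$ (which holds since $\vecb{A} \Pmap \vecb{a}$ and $B \Pmap b$).

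I expect the main obstacle to be the \textbf{abstraction case}, not because the algebra is hard but because one must be careful that the bound variable $\yx{n}$ has the right polarity for the extended context to remain in the core fragment: the clause in Definition \ref{facttermdef} quantifies over arbitrary $b \in \val{\rho}_i$, and to invoke the induction hypothesis I need $\vecb{A}.B \Pmap \vecb{a}.b$ to hold for the newly adjoined pair, which requires confirming that the relation $B \Pmap b$ on the last component is exactly what the logical-relation unfolding supplies. A subtle point worth verifying is that the definition of $\val{\y{r}}_{\vecb{a} : C}^i$ is a priori derivation-dependent, so strictly speaking the induction is on the derivation tree of $\tdts{\y{r}}{C}{i}$ rather than on the term; since the theorem is stated for a given derivation, this causes no circularity, and indeed the theorem is later used to retroactively justify that the value is independent of the derivation modulo $\comp$. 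The variable and application cases, by contrast, are essentially immediate once Lemma \ref{embpmaplem} and the logical-relation definition of $\Pmap$ are in hand.
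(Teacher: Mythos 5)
Your proposal is correct and follows essentially the same route as the paper's own proof: induction on the derivation of the state judgement $\tdts{\y{r}}{C}{i}$, with Lemma \ref{embpmaplem} handling the variable cases (direct systems for positive types, inverse for negative) and the fact that $\Pmap$ is a logical relation closing the application and abstraction cases. Your write-up merely spells out the unfolding of the logical-relation definition and the derivation-dependence caveat, which the paper's terser proof leaves implicit.
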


\begin{proof}
The proof proceeds by induction on the derivation of $\tdt{\y{r}}{C}{i}$. For a variable $\yx{k}$ of a positive type we apply Lemma \ref{embpmaplem}. This yields
\[
\val{\yx{k}}_{\vecb{A}} = A_k \Pmap \emb{i_k}{j}(a_k) = \val{\yx{k}}_{\vecb{a} : C}^j,
\]
since $A_k \Pmap a_k$. A similar consideration holds for negative types. For application and abstraction the claim follows from the fact that $\Pmap$ is a logical relation.
\end{proof}

Consider the case of a closed term $\y{r} : \rho$. Let us assume that $\rho$ is some data type, then the theorem states that $\val{\y{r}} \Pmap \val{\y{r}}^i$. Recall that the predecessor relation $\pmap$ is seen as a directed equality between two states of the same object. We can thus read $\Pmap$, which is nothing more than $\pmap$ applied limits, as an equality between $\val{\y{r}}$ and $\val{\y{r}}^i$. As indicated in Section \ref{extsec}, $\val{\y{r}}^i$ is a strong form of an approximation of $\val{\y{r}}$. 

Due to the use of logical relations on higher types, application and $\lambda$-abstraction respect this equality. Since the type $\prop$ of propositions is part of the calculus, the equality holds also for truth values. The relation $\val{\y{r}} \Pmap \val{\y{r}}^i$ for type $\prop$, with the only values $true$ and $false$ in classical logic, is the identity. Consequently, truth in the limit and truth in a sufficiently large stage of the factor system coincide. Theorem \ref{mainthm} is the main result, and at the same time this theorem is necessary to show that the basic definition of the interpretation of a term, as introduced in Definition \ref{facttermdef}, is correct.

\begin{corollary}
\label{appcor1}
Let $\tdt{\y{r}}{\Gamma}{\rho}$ be a typed term with $i \leq i'$, $\tdts{\y{r}}{C}{i}$ and $\tdts{\y{r}}{C}{i'}$. Then $\val{\y{r}}_{\vecb{a} : C}^{i'} \pmap \val{\y{r}}_{\vecb{a} : C}^i$ holds for all $C$-environments $\vecb{a} \in \val{\Gamma}_C$.
\end{corollary}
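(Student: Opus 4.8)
The plan is to derive the statement directly from the main Theorem~\ref{mainthm} by choosing a single $\Gamma$-environment lying above $\vecb{a}$ and exploiting that its limit value carries no reference to a stage. First I would fix the $C$-environment $\vecb{a} = (a_0,\dots,a_{n-1})$ and, for each component $a_k \in \val{\rho_k}_{i_k}$, set $A_k := Emb_{i_k}(a_k) \in \val{\rho_k}$. Since $a_k \pmap a_k$ by reflexivity, property~(\ref{embtareq}) (with $i = i' = i_k$) yields $A_k \Pmap a_k$, so the resulting $\Gamma$-environment $\vecb{A} := (A_0,\dots,A_{n-1}) \in \val{\Gamma}$ satisfies $\vecb{A} \Pmap \vecb{a}$ componentwise. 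The essential point of fixing this one $\vecb{A}$ is that $\val{\y{r}}_{\vecb{A}} \in \val{\rho}$ is defined purely in the limit type structure and is therefore a single element, independent of any stage.

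Next I would apply Theorem~\ref{mainthm} twice to this same pair $(\vecb{A},\vecb{a})$: once using the state judgement $\tdts{\y{r}}{C}{i}$ and once using $\tdts{\y{r}}{C}{i'}$. Writing $A := \val{\y{r}}_{\vecb{A}}$, the two instances give $A \Pmap \val{\y{r}}_{\vecb{a} : C}^{i}$ and $A \Pmap \val{\y{r}}_{\vecb{a} : C}^{i'}$, with the \emph{same} limit element $A$ on the left in both cases. Finally, since $\val{\rho} = \elem((\val{\rho}_i)_{i \in \III_\rho})$ is a (pre)factor limit, its compactification is a prefactor system and hence satisfies~(\ref{ppmapeq}). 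Applying~(\ref{ppmapeq}) to $A \Pmap \val{\y{r}}_{\vecb{a} : C}^{i'}$ and $A \Pmap \val{\y{r}}_{\vecb{a} : C}^{i}$ with $i \leq i'$ produces exactly $\val{\y{r}}_{\vecb{a} : C}^{i'} \pmap \val{\y{r}}_{\vecb{a} : C}^{i}$, as required.

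The only real subtlety, and the step I would treat most carefully, is the construction of a legitimate limit environment above $\vecb{a}$. A bare state $a_k$ does not by itself determine a consistent set, since its index set is the singleton $\{i_k\}$, which is not in $\fil(\III_{\rho_k})$; one must instead pass to $Emb_{i_k}(a_k)$, whose generating extension $\{\emb{i_k}{j}(a_k) \mid j \geq i_k\}$ is supported on the up-set $\up i_k \in \fil(\III_{\rho_k})$ and so genuinely lies in the limit set. A pleasant feature is that this routes through the embeddings present in \emph{every} factor system, so the argument needs no split into positive and negative types; once $\vecb{A} \Pmap \vecb{a}$ is secured, the conclusion is a direct, uniform invocation of Theorem~\ref{mainthm} together with~(\ref{ppmapeq}).
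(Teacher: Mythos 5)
Your proof is correct and follows essentially the same route as the paper: both define $\vecb{A} := Emb_C(\vecb{a})$, establish $\vecb{A} \Pmap \vecb{a}$ via the embedding coherence property, apply Theorem~\ref{mainthm} twice to the two state judgements, and conclude from the prefactor property of the compactification (your citation of~(\ref{ppmapeq}) and the paper's citation of~(\ref{pmapeq}) are interchangeable ways of invoking the same fact). The extra care you take in checking that $Emb_{i_k}(a_k)$ is a genuine limit element is sound but not needed beyond what the paper's appeal to the target structure already provides.
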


\begin{proof}
Define $\vecb{A} := Emb_{C}(\vecb{a})$. Then $\vecb{A} \Pmap \vecb{a}$ follows from Condition (\ref{embcond}), hence $\val{\y{r}}_{\vecb{A}} \Pmap \val{\y{r}}_{\vecb{a} : C}^{i'}$ and $\val{\y{r}}_{\vecb{A}} \Pmap \val{\y{r}}_{\vecb{a} : C}^i$ by Theorem \ref{mainthm}, and therefore $\val{\y{r}}_{\vecb{a} : C}^{i'} \pmap \val{\y{r}}_{\vecb{a} : C}^i$ by Condition (\ref{pmapeq}).
\end{proof}

The independence of the value $\val{\y{r}}_{\vecb{a} : C}^i$ (modulo $\comp$) from the state judgement follows from Corollary \ref{appcor1} if we take $i = i'$.

\section{Conclusion and Further Work}

We presented a model that can be used to interpret a fragment of the simply typed $\lambda$-calculus (which we called \emph{core fragment}) based on the assumption that infinite sets are potential infinite. We gave a formalization of the potential infinite based on the filters $\fil$ and sets $\ll$ within these filters. This allows one to avoid any notion of actual infinity. The function space in this model is a family of finite function spaces and the model already has an interpretation of the logical type $\prop$ with the usual two truth values. 

We introduced an interpretation of $\lambda$-terms from the core fragment which has a dynamic part, the factor system, and a static part, the limit of the factor system. Nevertheless, both parts are necessary, because from a dynamic point of view, the construction of a limit is not the end of the process, but an intermediate state. 

The next step is to extend the $\lambda$-calculus to constants, in particular to the logical constants of implication and universal quantifier. The challenge here is that the universal quantifier is not continuous and cannot be interpreted as a higher-order functional. The solution is to introduce an additional rule with a side condition $C \ll i$ on the state judgments. The correctness of the interpretation uses the fact that propositions have ``stable truth values'' in the sense that for each proposition there is a stage in the model where the truth value of the proposition does not change anymore during further extensions. A corresponding interpretation for first-order logic has been given in \cite{eberl2022RML}.

The reflection principle from Theorem \ref{mainthm} then states that all objects and propositions (propositions are specific terms of type $\prop$) in the limit are reflected in a sufficiently large state in the system. The interpretation is possible on the core fragment of simple type theory, not for all terms. However, this fragment includes a version of a classical higher-order logic. Unlike other models, such as domain theory \cite{abramsky1994domain}, it does not require (actual) infinite sets at all, and it includes logic.\footnote{The idea of a logic-enriched type theory \cite{gambino2006generalised} separates propositions from the underlying type theoretic framework. In our approach the logic is inside type theory, but the interpretation of the universal quantifier has an extra treatment.}

We do not expect that the theorem about the reflection principle can be extended from the core fragment to all terms of simply typed $\lambda$-calculus without some kind of further restrictions. This is because variables of function type correspond to arbitrary functions, which can also be used to define higher-order functions by $\lambda$-abstraction. From the perspective of extensibility, this necessitates the consideration of both covariant (with respect to the codomain) and contravariant (with respect to the domain) extensions simultaneously. For instance, the totality of higher-order functions requires that, for each argument, which could be a function $f$ of type $nat \to nat$, there must be a value. However, $f$ is not given as a single entity. So one must identify a property that guarantees totality and conditions that ensure invariance, that is, existing properties must be preserved by future extensions of $f$. This is a significant challenge for higher-order functions and it is related to properties of the filter $\fil$. It is also the reason that for arbitrary functions, there are no simple state judgements as formulated in the rules (\textsc{Var$+$}) and (\textsc{Var$-$}). A future task will be to identify conditions under which the core fragment can be extended to include function variables.

\bibliography{ReflPotInf}

\end{document}